\documentclass[reqno]{amsart}
\usepackage[top=1.2in,bottom=1.2in,left=1.3in,right=1.3in]{geometry}
\usepackage{amscd}
\usepackage{amssymb}
\usepackage{calc,pifont}
\usepackage[T1]{fontenc}
\usepackage[latin1]{inputenc}

\usepackage[arrow,curve,matrix]{xy}
\usepackage{times}
\usepackage{graphicx}
\usepackage{extarrows}
\usepackage{flafter}
\usepackage{placeins}
\usepackage{enumitem}
\usepackage[OT2,T1]{fontenc}
\DeclareSymbolFont{cyrletters}{OT2}{wncyr}{m}{n}
\DeclareMathSymbol{\Sha}{\mathalpha}{cyrletters}{"58}

\usepackage{centernot}
\usepackage{pb-diagram}
\usepackage{mathrsfs}
\usepackage[color]{showkeys}


\definecolor{refkey}{rgb}{1,1,1}
\definecolor{labelkey}{rgb}{1,1,1}
\definecolor{cite}{rgb}{0.9451,0.2706,0.4941}
\definecolor{ruri}{rgb}{0.0078,0.4022,0.8010}

\usepackage[%
bookmarks=true,bookmarksnumbered=true,%
colorlinks=true,linkcolor=ruri,citecolor=red%
 ]{hyperref}


\unitlength 1cm

\sloppy

\def\rank{{\rm rank}}
\def\rk{{\rm rk}}

\def\End{{\rm End}}

\def\ext{{\rm Ext}}



\theoremstyle{plain}

\newtheorem{theorem}{Theorem}[section]

\newtheorem{proposition/example}[theorem]{Proposition/Example}
\newtheorem{proposition}[theorem]{Proposition}
\newtheorem{corollary}[theorem]{Corollary}

\theoremstyle{definition}
\newtheorem{definition}[theorem]{Definition}

\newtheorem{example}[theorem]{Example}

\newtheorem{conjecture/question}[theorem]{Conjecture/Question}

\newtheorem{remark/definition}[theorem]{Remark/Definition}
\newtheorem{definition/notation}[theorem]{Definition/Notation}

\pagestyle{myheadings} \theoremstyle{remark}

\numberwithin{equation}{section}


\usepackage{marginnote,color,changepage}
\usepackage[usenames,dvipsnames]{xcolor}

\def\shownotes{\def\inline##1##2##3{ \begin{adjustwidth}{3mm}{7mm}\mbox{}\par \noindent
{\color{##1}\hspace{-1.9cm}{\large ##2}\vspace{-\baselineskip}\\##3}
\newline\end{adjustwidth}} \def\inlinewide##1##2##3{ \begin{adjustwidth}{0mm}{0cm}\mbox{}\par \noindent
{\color{##1}\hspace{-1.6cm}{\large ##2}\vspace{-\baselineskip}\\##3}
\newline\end{adjustwidth}}  \def\marg##1##2##3{\marginnote{\color{##1}{\large ##2}\\{\small ##3}}[-.8cm]}}

\shownotes

\begin{document}
\title {\textbf{  Degenerate and Stable Yang-Mills-Higgs Pairs}}
\author{Zhi Hu \& Sen Hu}
\address{School of Mathematics, University of Science and
Technology of China\\Hefei, 230026, China}
 \email{{\tt
halfash@mail.ustc.edu.cn,shu@ustc.edu.cn}}

\maketitle
 \begin{abstract}In this paper, we introduce some  notions on the pair consisting of a Chern connection  and a Higgs field closely related to the first and second variation of Yang-Mills-Higgs functional, such as strong Yang-Mills-Higgs pair, degenerate Yang-Mills-Higgs pair, stable Yang-Mills-Higgs pair. We investigate   some properties of such pairs.
\end{abstract}
\tableofcontents
\section{Introduction }
Since 1950s, Yang-Mills theory first explored by several physicists had a profound impact on the developments of
differential and algebraic geometry. A remarkable fruit owed to  Donaldson is  constructing invariants of 4-manifolds via studying the homology of the moduli space of anti-self-dual $SU(2)$-connections, where technical challenges come from Uhlenbeck compactification of moduli space and handling  singularities through the metric perturbations\cite{1,2}.
In 1987 Hitchin  considered the 2-dimensional reduction of the self-dual Yang-Mills equations on  $\mathbb{R}^4$ as a manner of symmetry breaking, then he introduced a (1,0)-form valued in  adjoint vector bundle  $\phi$ called the  Higgs field for  the  Riemann surface, which is  described by the so-called Hitchin self-duality equations\cite{3}:
\begin{align*}
  F_A+[\phi,\overline{\phi}]&=0,\\
  d_A^{\prime\prime}\phi&=0.
  \end{align*}
Influenced by the Hitchin's work, Simpson generalized the conception of Higgs field to the higher dimensional case\cite{31}, and he made
great innovations in various areas of algebraic geometry\cite{4,s2,s3}.  Since then Higgs bundles have emerged
in the last two decades as a central object of study in geometry,
with several links to physics and number theory.

Let us first recall some basic definitions.
\begin{definition}(\cite{b,5,s})
Let $X$ be an $n$-dimensional compact K\"{a}hler
manifold with K\"{a}hler form $\omega$, and let
$\Omega^1_X$ be the
the sheaf of holomorphic 1-forms on $X$. A Higgs sheaf over $X$ is a coherent
sheaf $E$ of dimension $n$ over $X$, together with a morphism $\phi:E\rightarrow E\otimes\Omega^1_X$
of $\mathcal{O}_X$-modules (that is
usually called the Higgs field), such that the morphism $\phi\wedge\phi:E\rightarrow E\otimes\Omega^2_X$
 vanishes. A Higgs
bundle  is a locally-free Higgs sheaf. A subsheaf $F$ of $E$ is called the Higgs
subsheaf if $\phi(F)\subset F\otimes\Omega^1_X$, i.e. the pair
$F = (F, \phi|_F )$ becomes itself a Higgs sheaf. Let $(E_1,\phi_1)$ and $(E_2,\phi_2)$ be two Higgs sheaves over $X$. A
morphism between them is a map $E_1\rightarrow E_2$  such that the following diagram commutes
$$\CD
  E_1 @>\phi_1>> E_1\otimes\Omega^1_X\\
  @V f VV @V f\otimes 1 VV  \\
  E_2 @>\phi_2>> E_1\otimes\Omega^1_X.
\endCD$$
 A Higgs sheaf $(E,\phi)$  over $X$
is called $\omega$-stable (resp. $\omega$-semistable) if it is torsion-free, and for any Higgs subsheaf $F$ with $0<\rank(F)<\rank(E)$  the
inequality $\mu_\omega(F)<\mu_\omega(E)$ (resp. $\leq$) holds, where the slop $\mu_\omega(E)$ of $E$ is defined by $\mu_\omega(E):=\frac{\deg_\omega(E)}{\rank(E)}=\frac{\int_X c_1(E)\wedge \omega^{n-1}}{\rank(E)}$. We say that a $\omega$-semistable Higgs sheaf is $\omega$-polystable if it
decomposes into a direct sum of  $\omega$-stable Higgs sheaves.
\end{definition}

Here we briefly mention some significant results of Simpson\cite{4,s2,s3}  which will be used in this paper.  Let $(E,\phi)$ be a Higgs bundle over a compact  K\"{a}hler manifold $(X,\omega)$, then it is $\omega$-polystable if and only if  there exists a Hermitian-Yang-Mills-Higgs metric  on it. This result  as an extension of the Hitchin-Kobayashi correspondence for Higgs
bundles is even true for manifolds which are not necessarily compact (satisfying some additional analytic requirements). In particular, there is an equivalence of
categories between the category of semisimple
flat bundles  and the category
of polystable Higgs bundles with vanishing (first and second) Chern classes, both
being equivalent to the category of harmonic bundles. If $X$ is a smooth projective variety,
there is a moduli space $\mathcal{M}$ as a quasi-projective variety of polystable Higgs bundles with vanishing Chern class and there is a natural action of $\mathbb{C}^*$ on $\mathcal{M}$
via multiplying  the Higgs field by a non-zero complex number $c$. Simpson showed that the limit as $c$ goes to zero exists and is unique. The limit
is therefore a fixed point of the $\mathbb{C}^*$-action, and is made into the variation of polarized Hodge structures.

 A Yang-Mills-Higgs system is a collection of data $\{(E,\phi), h, d_A\}$ for  a Higgs bundle $(E,\phi)$ with  a Hermitian metric $h$  and the corresponding Chern connection  $d_A$ on $E$, where the pair $(A,\phi)$  is called a Hitchin pair. These data give rise to a non-metric
connection $\mathcal{D}_{(A,\phi)}$ called the
Hitchin-Simpson connection on $E$ in the following way: $\mathcal{D}_{(A,\phi)}=d_A+\phi+\phi^*$ where  $ \phi^*:E\rightarrow E\otimes\overline{\Omega^1_X}$ is the adjoint of the Higgs field with respect to the  Hermitian metric
$ h$, namely $h(\theta(Y)s,t)=h(s,\theta^*(\bar Y)t)$ for the complex tangent vector $Y$ and the sections $s, t$ of $E$.   The curvature $\mathcal{R}_{(A,\phi)}$ of the Hitchin-Simpson connection is given by
  $\mathcal{R}_{(A,\phi)}=\mathcal{D}_{(A,\phi)}^2=(d_A+\phi+\phi^*)\wedge(d_A+\phi+\phi^*)
   =F_A+[\phi,\phi^*]+d^\prime_A(\phi)+d^{\prime\prime}_A(\phi^*)$,
  where $F_A=d_A^2$ denotes the  curvature of the Chern connection $d_A$  decomposed  into (1,0)-part $d^\prime_A$ and (0,1)-part $d^{\prime\prime}_A$. Let $L:\Lambda^{p,q}\rightarrow\Lambda^{p+1,q+1}$ be the operator of the multiplication by the K\"{a}hler form $\omega$, $\Lambda$ be the adjoint of $L$, and denote the (1,0)-part and (0,1)-part of the Hitchin-Simpson connection $\mathcal{D}_{(A,\phi)}$ by $\mathcal{D}^\prime_{(A,\phi)}$ and $\mathcal{D}^{\prime\prime}_{(A,\phi)}$ respectively. The following K\"{a}hler identities can be easily checked:
  \begin{eqnarray}\label{eq:a}
  \begin{aligned}
    i[\Lambda, \mathcal{D}^\prime_{(A,\phi)}]&=-(\mathcal{D}^{\prime\prime}_{(A,\phi)})^*,\\
    i[\Lambda, \mathcal{D}^{\prime\prime}_{(A,\phi)}]&=(\mathcal{D}^{\prime}_{(A,\phi)})^*.
  \end{aligned}
  \end{eqnarray}
  Let $\Theta_{(A,\phi)}$ stand for the anti-Hermitian part of  $\mathcal{D}_{(A,\phi)}$, which is exactly the (1,1)-part $F_A+[\phi,\phi^*]$. We define the mean curvature $\mathcal{K}_{(A,\phi)}$ of the Hitchin-Simpson connection, just
by contraction of its curvature with the operator $i\Lambda$, i.e. $\mathcal{K}_{(A,\phi)}=i\Lambda\mathcal{R}_{(A,\phi)}=i\Lambda\Theta_{(A,\phi)}\in\textrm{End}(E)$ or in terms of local frame field $\{e_i\}_{i=1}^{r}$ of $E$ and local coordinates $\{z_\alpha\}_{\alpha=1}^n$ of $X$, $\mathcal{K}^i_j=\omega^{\alpha\bar\beta}\mathcal{R}^i_{j\alpha\bar\beta}$.

For a Yang-Mills-Higgs system, one may attempt to  solve the following nonlinear equation
 \begin{equation}\label{a1}
  \det(\mathcal{K}_{(A,\phi)})=\lambda
\end{equation}
  for a constant $\lambda$. If the solution of the equation above exists  the corresponding system is called a special  Yang-Mills-Higgs system, in particular, it is called a degenerate Yang-Mills-Higgs system for the case $\lambda=0$. When $h$ is a weak  Hermitian-Yang-Mills metric on a Higgs bundle $(E,\phi)$, that is it satisfies the equation $\mathcal{K}_{(A,\phi)}=f\cdot Id_E$
for  a real function $f$ defined on $X$, we can obtain a spectial Yang-Mills-Higgs system $\{(E,\phi),\tilde h, d_A\}$ by taking $\tilde h$ to be a conformal transformation of $h$ such that $\tilde h$ is a Hermitian-Yang-Mills metric. Indeed,
since there is a solution $u$ for the equation $\Delta u=c-f$ where $c$ is a constant determined by $c\int_X\omega^n=\int_Xf\omega^n$, we only need to let $\tilde h=e^uh$ which is the desired Hermitian-Yang-Mills metric with the constant factor $c$. More generally, a Yang-Mills-Higgs pair $(A,\phi)$ (see the following definition) may produce a special  Yang-Mills-Higgs system. As a special case, this pair can be realized as the critical point of Yang-Mills-Higgs functional, called the strong Yang-Mills-Higgs pair (see the following definition). Our Yang-Mills-Higgs functional is a natural generation of Yang-Mills functional by replacing the curvature of Chern connection with that of Hitchin-Simpson connection. By calculate the first variation of functional, one can introduce the following flow equations for a pair $(A,\phi)$
\begin{eqnarray}
\begin{aligned}
  \frac{\partial A}{\partial t}&=(d^{\prime\prime}_A -d^\prime_A)(\mathcal{K}_{(A,\phi)}),\\
  \frac{\partial \phi}{\partial t}&=[\phi,\mathcal{K}_{(A,\phi)}].
\end{aligned}
\end{eqnarray}
It is easy to  check that the Hitchin pair will be preserved by this flow.

Now we introduce some concepts running through  this paper.
\begin{definition} Let $\{(E,\phi), h, d_A\}$ be a  Yang-Mills-Higgs system over an $n$-dimensional compact  K\"{a}hler manifold $(X,\omega)$. The associated Hitchin pair $(A,\phi)$ is called
\begin{itemize}
                                         \item a Yang-Mills-Higgs pair if it  satisfies  the following equation
\begin{align}\label{eq:p}
 d_A^\prime(\Lambda(F_A+[\phi,\phi^*]))=[\phi,\Lambda(F_A+[\phi,\phi^*])].
\end{align}
 \item a degenerate Yang-Mills-Higgs pair if   it is a  Yang-Mills-Higgs pair  with the property that $\det[\Lambda(F_A+[\phi,\phi^*])]=0$ at some point of $X$.
\item a strong Yang-Mills-Higgs pair if it is subject to the equations
\begin{eqnarray}\label{eq:o}
\begin{aligned}
  d_A^\prime(\Lambda(F_A+[\phi,\phi^*]))&=d_A^{\prime\prime}(\Lambda(F_A+[\phi,\phi^*]))=0,\\
  [\Lambda(F_A+[\phi,\phi^*]), \phi]&=[\Lambda(F_A+[\phi,\phi^*]),\phi^*]=0.
\end{aligned}
\end{eqnarray}
                                         \item a Hermitian-Yang-Mills-Higgs pair if   it satisfies the equation
                                          \begin{align}\label{po}
                                            i\Lambda(F_A+[\phi,\phi^*])=\lambda Id_E
                                          \end{align}
                                          for the  constant $\lambda=\frac{2\pi n}{\int_X\omega^n}\mu_\omega(E)$.
                                       \end{itemize}
\end{definition}
 Obviously a Hermitian-Yang-Mills-Higgs pair is a strong Yang-Mills-Higgs pair, and a  Hermitian-Yang-Mills-Higgs pair is a degenerate Yang-Mills-Higgs pair if and only if $\deg_\omega(E)=0$.  Existence of such pairs is a strong constraint on the Yang-Mills-Higgs system. In section 2, we will  exhibit the
mutual restriction of these constraints and stability conditions in algebraic geometry. For example, strongness and degeneracy conditions together generally   force the Higgs bundle to split, then by the principle of curvature decreases in Higgs subbundles, one can show that if the associated Hitchin pair is not a Hermitian-Yang-Mills-Higgs pair,  the Higgs bundle in the Yang-Mills-Higgs system cannot be semistable.

   Deformation of Yang-Mills-Higgs system is described by the deformation of the pair $(d^{\prime\prime}_A,\phi)$ which is controlled by a series of equations analogous to the Maurer-Cartan equation in the deformation theory of holomorphic vector bundle. The obstruction of deformation is also characterised by certain second order cohomology. Via calculating  the second variation of Yang-Mills-Higgs functional, one can fix which admitted deformation is stable for a given strong Yang-Mills-Higgs pair or if a strong Yang-Mills-Higgs pair is stable with respect to  a chosen  deformation. Due to the present of Higgs fields, there is no strong   Yang-Mills-Higgs pair that is stable  along arbitrary admitted deformation (e.g. $\mathbb{C^*}$-action on Higgs fields).  Such stability condition in the sense of differential geometry is reduced to judge the positive-definiteness of  a Hermitian quadratic form.

The remainder of this paper is organized as follows. We first introduce the Yang-Mills-Higgs functional for the higher dimensional case as a analog of that for Riemann surface. Next we  we study some properties of  the pairs associated with a Yang-Mills-Higgs system, for example,  the interaction with the stability of Higgs bundle and Higgs cohomology. In the last section, we consider the deformation of Hitchin pair and establish the notion of stability of strong Yang-Mills-Higgs pair along the admitted deformation via calculating the second variation of Yang-Mills-Higgs functional.

\section{Pairs Associated with the Yang-Mills-Higgs System }
\subsection{Yang-Mills-Higgs Functional}

Fix a Higgs bundle $(E,\phi)$ over a compact Riemann surface $\Sigma$, then the space $\mathcal{M}(E)$ of
holomorphic structures on $E$ as a smooth complex Hermitian vector
bundle can be identified with an affine space locally modeled on
$\Omega^{0,1}(\textrm{End}(E))$, i.e. $\mathcal{M}(E)=d^{\prime\prime}_A+\mathcal{A}^{0,1}(\textrm{End}(E))$
for a fixed operator
$d^{\prime\prime}_A$, and the Higgs field $\phi$ takes value in $H^0(\Sigma,K\otimes\textrm{End}(E))\simeq H^1(\Sigma,\textrm{End}(E))^\vee$ where $K$ denotes the canonical line bundle on $\Sigma$.
The
tangent space to $T^*\mathcal{M}(E)$ at any point can be naturally identified with the direct sum $\mathcal{A}^{0,1}(\textrm{End}(E))\oplus \mathcal{A}^{1,0}(\textrm{End}(E))$. Under  this identification the  metric on $T^*\mathcal{M}(E)$ is given by $$g((\psi_1,\phi_1),(\psi_2,\phi_2))=i(\int_\Sigma\psi_1^*\widehat{\wedge} \psi_2+\phi_1\widehat{\wedge}\phi_2^*+\int_\Sigma\psi_2^*\widehat{\wedge} \psi_1+\phi_2\widehat{\wedge}\phi_1^*)$$
where $(\psi_i,\phi_j)\in \mathcal{A}^{0,1}(\textrm{End}(E))\oplus \mathcal{A}^{1,0}(\textrm{End}(E))$ and for $\psi_1=f\otimes u, \psi_2=g\otimes v$ with $f,g\in\mathcal{A}^{0,1}(\Sigma),u,v\in \textrm{End}(E)$,  $\psi_1^*\widehat{\wedge} \psi_2=\sum_ih(v(e_i), u(e_i))\bar{f}\wedge g$.
Moreover there are compatible complex structures $I, J,K$ defined by
\begin{align*}
  I(\psi,\phi)&=(i\psi,i\phi),\\
  J(\psi,\phi)&=(i\phi^*, -i\psi^*),\\
  K(\psi,\phi)&=(-\phi^*, \psi^*).
\end{align*}
satisfying the usual quaternionic relations, namely $I^2=J^2=K^2=IJK=-1$. This defines the hyperk\"{a}hler structure on $T^*\mathcal{M}(E)$.
Let $\mathcal{G}$ denote the gauge group of $E$, which acts on $T^*\mathcal{M}(E)$ preserving the hyperk\"{a}hler structure.
The moment maps for this action are given by\cite{3}
\begin{align*}
 \mu_I&=F_A+[\phi,\phi^*],\\
 \mu_J&=-i(d^{\prime\prime}_A\phi+d^{\prime}_A\phi^*),\\
 \mu_K&=-d^{\prime\prime}_A\phi+d^{\prime}_A\phi^*.
\end{align*}

In analogy with the Yang-Mills functional, the full Yang-Mills-Higgs functional is defined to be the norm-square of the
hyperk\"{a}hler moment map, that is  we specify
\begin{equation}\label{eq:1}
YMH(A,\phi)=\int_\Sigma(|F_A+[\phi,\phi^*]|^2+4|d^{\prime\prime}_A\phi|^2)dV.
\end{equation}
We can generalize it to the higher dimensional  manifold $X$, thus we consider the following functional
\begin{equation}\label{eq:11}
YMH(A,\phi)=\int_X|F_A+[\phi,\phi^*]+d_A(\phi+\phi^*)|^2dV.
\end{equation}
\begin{proposition}Let  $(E,\phi)$ be  a polystable Higgs bundle on the K\"{a}hler manifold $(X,\omega)$,  then  we have the inequality
\begin{equation}\label{p}
||F_A||^2+3||[\phi,\phi^*]||^2+2||\nabla_A\phi||^2\geq2 \langle\mathbb{R}\lrcorner\phi,\phi\rangle,
  \end{equation}
  where the $L^2$-norm $||\cdot||$ and the global $L^2$-inner product $\langle\cdot,\cdot\rangle$ are with respect to the  Hermitian-K\"{a}hler metric
$g_\omega$  associated with K\"{a}hler form $\omega$ on $X$ and the Hermitian-Yang-Mills-Higgs metric $h$ on $E$, and $A$ is the Chern connection with respect to $h$,  $\mathbb{R}$ denotes the Ricci curvature tensor of $X$. \end{proposition}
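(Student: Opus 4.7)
By Simpson's generalized Hitchin-Kobayashi correspondence cited in the introduction, polystability of $(E,\phi)$ furnishes a Hermitian-Yang-Mills-Higgs metric $h$ whose Chern connection $A$ satisfies $i\Lambda(F_A+[\phi,\phi^*])=\lambda\, Id_E$. The plan is to combine two ingredients: the manifest non-negativity of the higher-dimensional Yang-Mills-Higgs functional (\ref{eq:11}), and a Bochner-Kodaira-Nakano identity for the holomorphic Higgs field. I would first expand $YMH(A,\phi)$ by bidegree: since $d_A''\phi=0$ and $d_A'\phi^*=0$, one has $d_A(\phi+\phi^*)=d_A'\phi+d_A''\phi^*$, which lives in bidegree $(2,0)\oplus(0,2)$ and is therefore pointwise orthogonal to the $(1,1)$-form $F_A+[\phi,\phi^*]$; moreover $|d_A'\phi|^2=|d_A''\phi^*|^2$ pointwise. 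This gives
\[
0\;\le\;YMH(A,\phi)\;=\;\|F_A+[\phi,\phi^*]\|^2+2\|d_A'\phi\|^2.
\]
Expanding $\|F_A+[\phi,\phi^*]\|^2=\|F_A\|^2+2\langle F_A,[\phi,\phi^*]\rangle+\|[\phi,\phi^*]\|^2$ and processing the cross term by integration by parts, using the Bianchi identity $d_AF_A=0$ together with $d_A[\phi,\phi^*]=[d_A'\phi,\phi^*]-[\phi,d_A''\phi^*]$, produces an additional contribution proportional to $\|[\phi,\phi^*]\|^2$. This is the mechanism responsible for the eventual coefficient $3$ on $\|[\phi,\phi^*]\|^2$.

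Next I would apply the Bochner-Kodaira-Nakano identity to $\phi$ viewed as a harmonic section of $\Omega^{1,0}_X\otimes\textrm{End}(E)$, which is $\bar\partial_A$-closed by the Higgs condition and $\bar\partial_A^*$-closed for bidegree reasons. On a K\"ahler manifold, the identity for a twisted $(1,0)$-form relates $\|\nabla_A\phi\|^2$ to $\|d_A'\phi\|^2$ together with curvature-of-$X$ and curvature-of-$E$ contributions. The cotangent factor $\Omega^{1,0}_X$ is precisely what produces the Ricci term $\langle\mathbb{R}\lrcorner\phi,\phi\rangle$, via the induced Hermitian connection on $T^{1,0}X$. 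The curvature-of-$E$ piece is then absorbed into $\|[\phi,\phi^*]\|^2$ terms using the HYM-Higgs equation $i\Lambda(F_A+[\phi,\phi^*])=\lambda\, Id_E$. Collecting all terms yields the stated inequality.

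The main obstacle is the precise bookkeeping of the Bochner-Weitzenb\"ock identity for a holomorphic twisted $(1,0)$-form: isolating the pure Ricci contribution with the correct coefficient $2$, and matching the curvature-of-$E$ contributions against the cross term from the first step so that the coefficients $(1,3,2)$ on $\|F_A\|^2$, $\|[\phi,\phi^*]\|^2$, $\|\nabla_A\phi\|^2$ emerge consistently. Sign conventions for the Hermitian inner product on $\textrm{End}(E)$-valued forms, and for the action of $F_A$ on $\Omega^{1,0}$-valued sections, require careful attention throughout; in particular the interaction between the commutator $[\phi,\phi^*]$ produced by the Bianchi step and the one absorbed from the Bochner step must be compatible in sign, otherwise the coefficient $3$ would collapse.
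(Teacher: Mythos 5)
Your overall strategy --- non-negativity of a Yang-Mills-Higgs-type functional, the Bochner-Weitzenb\"{o}ck formula for the $\End(E)$-valued $(1,0)$-form $\phi$, the K\"{a}hler identities, and the Hermitian-Yang-Mills-Higgs equation supplied by polystability --- is the same as the paper's. But there is a genuine gap at exactly the step you flag as ``the main obstacle,'' namely the emergence of the coefficients $(1,3,2)$. You attribute the coefficient $3$ on $||[\phi,\phi^*]||^2$ to an integration by parts of the cross term $2\langle F_A,[\phi,\phi^*]\rangle$ using the Bianchi identity $d_AF_A=0$. That mechanism does not work: neither $F_A$ nor $[\phi,\phi^*]$ is exhibited as a derivative in that pairing, so there is no derivative to move, and no $||[\phi,\phi^*]||^2$ contribution arises this way. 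In the paper the cross term is rewritten purely algebraically (pointwise, with no integration by parts) as $2\textrm{Re}\langle F_A\lrcorner\phi,\phi\rangle$ and fed into the Bochner-Weitzenb\"{o}ck formula, which gives $||\nabla_A\phi||^2-||d_A\phi||^2-||d_A^*\phi||^2-\langle\mathbb{R}\lrcorner\phi,\phi\rangle$; the K\"{a}hler identities then convert $||d_A'\phi||^2-||d_A''\phi||^2$ into $-i\langle\phi,[\Lambda F_A,\phi]\rangle-||(d_A')^*\phi||^2$, and it is the substitution $i\Lambda F_A=\lambda\, Id-i\Lambda[\phi,\phi^*]$ together with a Jacobi-identity computation yielding $i\langle[\Lambda F_A,\phi],\phi\rangle=-||[\phi,\phi^*]||^2$ that contributes the extra $2||[\phi,\phi^*]||^2$. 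You do gesture at the second half of this (``the curvature-of-$E$ piece is absorbed using the HYM-Higgs equation''), but the Bianchi-based half of your accounting is wrong, so the coefficient $3$ is never actually derived.

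A secondary issue is your starting point. You begin from the functional \eqref{eq:11}, which for a genuine Higgs field equals $||F_A+[\phi,\phi^*]||^2+2||d_A'\phi||^2$, whereas the paper's argument rests on the non-negativity of $||F_A+[\phi,\phi^*]||^2+4||d_A''\phi||^2$, which for a Higgs field reduces to $||F_A+[\phi,\phi^*]||^2$. Your extra non-negative term $2||d_A'\phi||^2$ lands on the left-hand side of the inequality you would ultimately derive, so even with perfect bookkeeping you would obtain only the weaker statement $||F_A||^2+3||[\phi,\phi^*]||^2+2||\nabla_A\phi||^2+2||d_A'\phi||^2\geq 2\langle\mathbb{R}\lrcorner\phi,\phi\rangle$, unless you separately dispose of that term. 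The fix is simply to drop it and take $||F_A+[\phi,\phi^*]||^2\geq 0$ as the point of departure.
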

\begin{proof}Generally we consider the following functional without the requirement that $\phi$ is a Higgs field
\begin{align*}
                  \mathcal{F}(A,\phi)&=||F_A+[\phi,\phi^*]||^2+4||d^{\prime\prime}_A\phi||^2\\
                  &=||F_A||^2+||[\phi,\phi^*]||^2+2\langle F_A,[\phi,\phi^*]\rangle.
\end{align*}
The term coupling the curvature and the Higgs field can be calculated as
\begin{align*}
  2\langle F_A,[\phi,\phi^*]\rangle=&2\int_X\sum_ig_\omega\otimes h(F_A(e_i),[\phi,\phi^*](e_i))=2\int_X\sum_i h((F_A)_{\mu\bar \nu}(e_i),[\phi\bar{^\mu},(\phi^*)^{\nu}](e_i))\\
  =&\int_X\sum_i h((F_A)_{\mu\bar \nu}\phi^{\bar\nu}(e_i),\phi\bar{^\mu}(e_i))-h(\phi^{\bar\nu}(F_A)_{\mu\bar \nu}(e_i),\phi\bar{^\mu}(e_i))\\
 &+\int_X h(\phi\bar{^\mu}(e_i),(F_A)_{\mu\bar \nu}\phi^{\bar\nu}(e_i))-h(\phi\bar{^\mu}(e_i),\phi^{\bar\nu}(F_A)_{\mu\bar \nu}(e_i))\\
  =&2\textrm{Re}\langle F_A\lrcorner\phi,\phi\rangle,
\end{align*}
where   $\phi^{\bar\nu}=g_\omega^{\mu\bar\nu}\phi_\mu$, $(\phi^*)^{\nu}=g_\omega^{\nu\bar\mu}\phi^*_{\bar\mu}$.
On the other hand, by the Bochner-Weitzenb\"{o}ck formula for bundle-valued 1-form\cite{6}
$$\{d_A,d^*_A\}\phi_\mu=\nabla^*_A\nabla_A\phi_\mu-\mathbb{R}_{\mu\bar\nu}\phi^{\bar\nu}-[(F_A)_{\mu\bar{\nu}},\phi^{\bar\nu}],$$
we obtain
$$ \langle F_A,[\phi,\phi^*]\rangle=||\nabla_A\phi||^2-||d_A\phi||^2-||d_A^*\phi||^2- \langle\mathbb{R}\lrcorner\phi,\phi\rangle.$$
Therefore we arrive at
\begin{equation*}
\mathcal{F}(A,\phi)=||F_A||^2+||[\phi,\phi^*]||^2+2||\nabla_A\phi||^2-2||d_A^\prime\phi||^2-2||(d_A^\prime)^*\phi||^2+2||d_A^{\prime\prime}\phi||^2-2 \langle\mathbb{R}\lrcorner\phi,\phi\rangle.
\end{equation*}
It follows from the K\"{a}hler identities $i[\Lambda,d_A^\prime]=-(d_A^{\prime\prime})^*$, $i[\Lambda,d_A^{\prime\prime}]=(d_A^{\prime})^*$
that
\begin{align*}
  ||d_A^\prime\phi||^2-||d_A^{\prime\prime}\phi||^2&=-i\langle\phi,[\Lambda,d_A^{\prime\prime}]d_A^\prime\phi+[\Lambda,d_A^\prime]d_A^{\prime\prime}\phi\rangle\\
  &=-i\langle\phi,[\Lambda F_A,\phi]-d_A^\prime[\Lambda,d_A^{\prime\prime}]\phi\rangle\\
  &=-i\langle\phi,[\Lambda F_A,\phi]\rangle-||(d_A^\prime)^*\phi||^2.
\end{align*}
Thus we find that the
functional  $\mathcal{F}(A,\phi)$ can be rewritten as
\begin{equation*}
  \mathcal{F}(A,\phi)=||F_A||^2+||[\phi,\phi^*]||^2+2||\nabla_A\phi||^2-2 \langle\mathbb{R}\lrcorner\phi+i[\Lambda F_A,\phi],\phi\rangle.
  \end{equation*}
For the Hermitian-Yang-Mills-Higgs metric, $i\Lambda F_A=\lambda Id-i\Lambda[\phi,\phi^*]$, thereby
\begin{align*}
i\langle[\Lambda F_A,\phi],\phi\rangle &=-i \langle[\Lambda[\phi,\phi^*],\phi],\phi\rangle =i \langle[\phi,\Lambda[\phi,\phi^*]],\phi\rangle  \\
 &=\int_X\sum_ih([\phi^{\bar\mu},\phi^*_{\bar\mu}](e_i),[\phi^*_{\bar\nu},\phi^{\bar\nu}](e_i))
  =-\int_X\sum_ih([[\phi^{\bar\mu},\phi^*_{\bar\mu}],\phi_{\nu}](e_i),\phi^{\bar\nu}(e_i))\\
  &=-\int_X\sum_ih([[\phi_{\nu},\phi^*_{\bar\mu}],\phi^{\bar\mu}](e_i),\phi^{\bar\nu}(e_i))
  =-\int_X\sum_ih([\phi_{\nu},\phi^*_{\bar\mu}](e_i),[\phi^{\bar\nu},(\phi^*)^{\mu}](e_i))\\
 &=-||[\phi,\phi^*]||^2.
\end{align*}
Then  by the semi-positivity of $\mathcal{F}(A,\phi)$, we deduce the inequality \eqref{p}.
\end{proof}
\subsection{ From Hitchin Pairs to Strong Yang-Mills-Higgs Pairs}

In this section, we study some properties of the pairs associated with a Yang-Mills-Higgs system.
\begin{proposition}  Let $\{(E,\phi), h, d_A\}$ be a  Yang-Mills-Higgs system over an $n$-dimensional compact  K\"{a}hler manifold $(X,\omega)$,
$(A,\phi)$ be the associated  Hitchin pair and  $0=E_0\subset E_1\subset\cdots\subset E_{l-1}\subset E_l=E$ be the  unique Harder-Narasimhan filtration of Higgs bundle $(E,\phi)$. Then the slope $\mu_\omega(E_1)$ of $E_1$ is not greater than the integral $\frac{\int_X\lambda(x)\omega^n}{2\pi n}$  for the largest eigenvalue $\lambda(x)$ of  the Hermitian matrix $\mathcal{K}_{(A,\phi)}|_{x}$ at $x\in X$.
\end{proposition}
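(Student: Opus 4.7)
The plan is to establish a Higgs-twisted analogue of the classical Atiyah--Bott subbundle slope inequality and apply it to $F:=E_1$, the maximal destabilizing Higgs subsheaf. Since $E_1$ is saturated and $\phi$-invariant, it is reflexive and hence locally free outside an analytic subset $Z\subset X$ of complex codimension at least two; this singular locus contributes nothing to any top-degree integral, so I may work throughout on $X\setminus Z$, where the orthogonal projection $\pi\in C^\infty(\textrm{End}(E))$ onto $F$ with respect to $h$ is smooth and self-adjoint.

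The central step is to prove
\begin{equation*}
2\pi n\,\deg_\omega(F)\;\le\;\int_X\textrm{tr}(\pi\cdot\mathcal{K}_{(A,\phi)})\,\omega^n.
\end{equation*}
Applying Chern--Weil to the induced Chern connection $A_F$ on $F$ together with the standard curvature decomposition $F_{A_F}=\pi F_A\pi|_F-\beta^*\wedge\beta$, where $\beta\in\Omega^{1,0}(\Hom(F,F^\perp))$ is the holomorphic second fundamental form, and using the Kähler identity $\alpha\wedge\omega^{n-1}=(\Lambda\alpha)\,\omega^n/n$ for any real $(1,1)$-form $\alpha$, one obtains
\begin{equation*}
2\pi n\,\deg_\omega(F)\;=\;\int_X\textrm{tr}(\pi\cdot i\Lambda F_A)\,\omega^n\;-\;\|\beta\|_{L^2}^2.
\end{equation*}
It therefore suffices to show that $\textrm{tr}(\pi\cdot i\Lambda[\phi,\phi^*])\ge 0$ pointwise, because adding this non-negative Higgs correction to the right-hand side promotes $i\Lambda F_A$ to $\mathcal{K}_{(A,\phi)}$.

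For that positivity, smoothly split $E=F\oplus F^\perp$ and write $\phi=\phi_\alpha dz^\alpha$ in block form. The Higgs invariance $\phi(F)\subset F\otimes\Omega^1_X$ forces each $\phi_\alpha$ to be block-upper-triangular, with diagonal block $A_\alpha\in\textrm{End}(F)$ and off-diagonal block $B_\alpha\in\Hom(F^\perp,F)$. A direct block-matrix expansion then gives
\begin{equation*}
\textrm{tr}(\pi[\phi_\alpha,\phi^*_{\bar\beta}])=\textrm{tr}([A_\alpha,A_\beta^*])+\textrm{tr}(B_\alpha B_\beta^*)=\textrm{tr}(B_\alpha B_\beta^*),
\end{equation*}
since the trace of any commutator vanishes. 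Contracting with $g^{\alpha\bar\beta}$ yields the pointwise non-negative quantity $\sum_\alpha|B_\alpha|^2\ge 0$, as required.

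Finally, because $\mathcal{K}_{(A,\phi)}(x)$ is Hermitian and $\pi$ is an orthogonal projection of rank $\rank(F)$, the elementary pointwise bound $\textrm{tr}(\pi M)\le\rank(F)\,\lambda_{\max}(M)$ applied with $M=\mathcal{K}_{(A,\phi)}(x)$ gives
\begin{equation*}
\int_X\textrm{tr}(\pi\cdot\mathcal{K}_{(A,\phi)})\,\omega^n\;\le\;\rank(F)\int_X\lambda(x)\,\omega^n,
\end{equation*}
and dividing by $2\pi n\,\rank(F)$ produces the advertised bound on $\mu_\omega(E_1)$. The main obstacle I anticipate is simply pinning down the sign conventions in the second fundamental form decomposition and in the bracket-trace calculation, since a flipped sign in either place would reverse the desired inequality; dealing with the codimension-two singular locus of $E_1$ is a standard but unavoidable nuisance.
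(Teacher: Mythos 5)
Your proof is correct and follows essentially the same route as the paper: both rest on the Chern--Weil formula for the degree of the Higgs subsheaf $E_1$, in which the mean curvature of $E$ projected to $E_1$ is corrected by non-positive second-fundamental-form terms (the paper packages the holomorphic and Higgs corrections together via the induced Hitchin--Simpson connection, while you separate the classical term $-\|\beta\|^2$ from the pointwise positivity of $\textrm{tr}(\pi\, i\Lambda[\phi,\phi^*])$), followed by the same elementary bound $\textrm{tr}(\pi M)\le\rank(\pi)\,\lambda_{\max}(M)$ for Hermitian $M$. Your explicit handling of the codimension-two singular locus of $E_1$ is a welcome refinement that the paper glosses over.
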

\begin{proof}Let us  denotes by $E_{i}^\perp$ the orthogonal complement of $E_i$ in $E_{i+1}$, then the relation between the component $\mathcal{R}_{(A,\phi)}|_{E_1}$ of $\mathcal{R}_{(A,\phi)}$ restricted on $E_1$ and the  curvature $\mathcal{R}_{(A,\phi)}(E_1)$ with respect to the induced Hitchin-Simpson connection on $E_1$ is given by
 \begin{align*}
                                                                                                                                        \mathcal{R}_{(A,\phi)}|_{E_1}-\mathcal{R}_{(A,\phi)}(E_1)
 =&-\alpha_1\wedge\alpha_1^*-\alpha^{(1)}_2\wedge(\alpha^{(1)}_2)^*-\cdots-\alpha^{(l-2)}_{l-1}\wedge(\alpha^{(l-2)}_{l-1})^*\\&+\beta_1\wedge\beta_1^*+\beta^{(1)}_2\wedge(\beta^{(1)}_2)^*+\cdots+\beta^{(l-2)}_{l-1}\wedge(\beta^{(l-2)}_{l-1})^*,\end{align*}
 where $\alpha^{(i-1)}_i=\alpha_i\circ p_i\circ\cdots\circ p_2$ for $\alpha_i\in \Omega^{0,1}\otimes \textrm{Hom}(E_i^\perp,E_{i})$ coming from the decomposition of the holomorphic structure and the natural  projection $p_i$ from $E_i$ to $E_{i-1}$, $\beta^{(i-1)}_i=\beta_i\circ p_i\circ\cdots\circ p_2$ for $\beta_i\in \Omega^{1,0}\otimes \textrm{Hom}(E_i^\perp,E_{i})$ engendered by the decomposition of the Higgs field. Then \begin{align*}
                                                                      &\deg_\omega(E_1)=\int_Xc_1(E_1)\wedge\omega^{n-1}=\frac{i}{2\pi n}\int_X\textrm{Tr}(\Lambda\mathcal{R}_{(A,\phi)}(E_1))\omega^n\\
                                                                      =&\frac{1}{2\pi n}\int_X(\textrm{Tr}(i\Lambda\Theta|_{E_1})-|\alpha_1|^2-|\alpha^{(1)}_2|^2-\cdots-
                                                                      |\alpha^{(l-2)}_{l-1}|^2-|\beta_1|^2-|\beta^{(1)}_2|^2-\cdots-
                                                                      |\beta^{(l-2)}_{l-1}|^2)\omega^n\\
                                                                      \leq&\frac{1}{2\pi n}\textrm{rank}(E_1)\int_X\lambda\omega^n,
                                                                                         \end{align*}
where the property of   Hermitian matrix that any  diagonal element is  not bigger than the  largest eigenvalue plays a crucial role. Indeed, for any Hermitian matrix $H=[H_{ij}]=U\textrm{diag}\{\lambda_1,\cdots,\lambda_r\}U^*$ with  some unitary matrix $U$ and eigenvalues $\lambda_1,\cdots,\lambda_r$, one has $H_{ii}=\sum_j\lambda_jU_{ij}\bar U_{ij}\leq \lambda_{\max}\sum_jU_{ij}\bar U_{ij}=\lambda_{\max}(:=\max\{\lambda_1,\cdots,\lambda_r\})$.
\end{proof}

\begin{proposition}Let $(A,\phi)$ be a Yang-Mills-Higgs pair. The following facts are obvious:
\begin{enumerate}
  \item If $[\phi,\Lambda\Theta]$ is $\Delta_{d^\prime_A}$-harmonic, then $(A,\phi)$ is a strong Yang-Mills-Higgs pair.
  \item If $(A,\phi)$ is non-degenerate,  then  $\det(\Lambda\Theta)$ is a non-zero constant.
\end{enumerate}
\end{proposition}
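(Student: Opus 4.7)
\emph{Part (1).} The Yang-Mills-Higgs pair equation \eqref{eq:p} reads $d_A'(\Lambda\Theta_{(A,\phi)})=[\phi,\Lambda\Theta_{(A,\phi)}]$. My plan is to pair this identity with $d_A'(\Lambda\Theta_{(A,\phi)})$ itself in the global $L^{2}$ inner product on the compact K\"ahler manifold $X$, and then move $(d_A')^{*}$ onto the right-hand side. Since $[\phi,\Lambda\Theta_{(A,\phi)}]$ is $\Delta_{d_A'}$-harmonic, in particular $(d_A')^{*}[\phi,\Lambda\Theta_{(A,\phi)}]=0$, so that
$$\|d_A'(\Lambda\Theta_{(A,\phi)})\|^{2}=\langle d_A'(\Lambda\Theta_{(A,\phi)}),[\phi,\Lambda\Theta_{(A,\phi)}]\rangle=\langle\Lambda\Theta_{(A,\phi)},(d_A')^{*}[\phi,\Lambda\Theta_{(A,\phi)}]\rangle=0.$$
Hence $d_A'(\Lambda\Theta_{(A,\phi)})=0$ and, back-substituting into the Yang-Mills-Higgs equation, $[\phi,\Lambda\Theta_{(A,\phi)}]=0$. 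To get the remaining two conditions of \eqref{eq:o}, I would use that the Chern connection is unitary, so that $(d_A's)^{*}=d_A''(s^{*})$ on $\mathrm{End}(E)$; combining this with the fact that $\mathcal{K}_{(A,\phi)}=i\Lambda\Theta_{(A,\phi)}$ is Hermitian (hence $\Lambda\Theta_{(A,\phi)}$ is skew-Hermitian), the two identities already established yield $d_A''(\Lambda\Theta_{(A,\phi)})=0$ and $[\Lambda\Theta_{(A,\phi)},\phi^{*}]=0$ upon taking fiberwise adjoints.

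\emph{Part (2).} Here the plan is to compute $d(\det\Lambda\Theta_{(A,\phi)})$ directly via Jacobi's identity. First I would observe that the very same Hermitian-adjoint trick used at the end of (1), applied this time to the YMH equation itself, gives $d_A''(\Lambda\Theta_{(A,\phi)})=[\Lambda\Theta_{(A,\phi)},\phi^{*}]$, so that
$$d_A(\Lambda\Theta_{(A,\phi)})=[\phi-\phi^{*},\,\Lambda\Theta_{(A,\phi)}].$$
Because $\det\Lambda\Theta_{(A,\phi)}$ is a genuine scalar function on $X$ and, by non-degeneracy, $\Lambda\Theta_{(A,\phi)}$ is invertible at every point, Jacobi's formula yields
$$d(\det\Lambda\Theta_{(A,\phi)})=\det(\Lambda\Theta_{(A,\phi)})\cdot\textrm{tr}\bigl((\Lambda\Theta_{(A,\phi)})^{-1}\,d_A(\Lambda\Theta_{(A,\phi)})\bigr).$$
Cyclicity of the trace applied to the commutator kills the trace factor, since $\textrm{tr}((\Lambda\Theta_{(A,\phi)})^{-1}[\phi-\phi^{*},\Lambda\Theta_{(A,\phi)}])=\textrm{tr}(\phi-\phi^{*})-\textrm{tr}(\phi-\phi^{*})=0$. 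Thus $d(\det\Lambda\Theta_{(A,\phi)})=0$, and on the connected compact $X$ the function $\det\Lambda\Theta_{(A,\phi)}$ is a constant, which is non-zero by the invertibility assumption.

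The only mild obstacle in either argument is the bookkeeping of Hermitian adjoints: specifically the identity $(d_A's)^{*}=d_A''(s^{*})$ on $\mathrm{End}(E)$ and the sign distinguishing $\Lambda\Theta_{(A,\phi)}$ (skew-Hermitian) from $\mathcal{K}_{(A,\phi)}$ (Hermitian). Both are standard reflections of the unitarity of the Chern connection, and once they are in place, as the proposition asserts, everything else reduces to a one-line $L^{2}$-pairing in (1) and a one-line trace cyclicity in (2).
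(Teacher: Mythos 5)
Your argument is correct. Note that the paper supplies no proof of this proposition at all — both facts are declared ``obvious'' — so there is no in-paper argument to compare against; what you have written is a sound verification, and it is entirely in the spirit of the paper's own techniques: the $L^2$-pairing in (1) is the same device used in the proof of Proposition \ref{q} to force both sides of the system $d_A'(\Lambda\Theta)=[\phi,\Lambda\Theta]$, $d_A''(\Lambda\Theta)=[\phi^*,\Lambda\Theta]$ to vanish, and the adjoint identities $(d_A'u)^*=d_A''(u^*)$ and $(\Lambda\Theta)^*=-\Lambda\Theta$ are exactly the bookkeeping the paper relies on implicitly elsewhere. Two remarks. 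In (1) you only use the coclosed half of harmonicity, which is fine; and your observation that the conjugate equation $d_A''(\Lambda\Theta)=[\Lambda\Theta,\phi^*]$ follows from \eqref{eq:p} by taking fiberwise adjoints in fact holds for \emph{every} Yang-Mills-Higgs pair (it is just metric compatibility of the Chern connection), so the only thing to double-check is the overall sign against the convention used in the proof of Proposition \ref{q}; this is immaterial for your trace argument in (2), since $\mathrm{tr}\bigl(M^{-1}[\psi,M]\bigr)=0$ for either sign, and the passage from $\mathrm{tr}(M^{-1}\,dM)$ to $\mathrm{tr}(M^{-1}\,d_AM)$ is harmless because the connection term is itself a commutator. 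Constancy of $\det(\Lambda\Theta)$ on the connected compact $X$ then follows as you say.
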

\begin{proposition}Let $\{(E,\phi), h, d_A\}$  and $\{(E,\phi), \tilde h, d_{\tilde A}\}$  be two Yang-Mills-Higgs system with $h,\tilde{h}$ being   conformal to each other pointwise. If the corresponding Hitchin pairs $(A,\phi)$ and $(\tilde{A},\phi)$ are both Yang-Mills-Higgs pairs, then $\tilde{h}$ is just a rescaling of $h$.
\end{proposition}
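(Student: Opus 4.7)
The plan is to write $\tilde h = e^u h$ for a smooth real function $u$ on $X$ and show that the two Yang-Mills-Higgs pair equations, subtracted from one another, force $u$ to be constant. The whole argument rests on the fact that a conformal rescaling changes the Chern datum only by a scalar (identity) endomorphism, which commutes with $\phi$ and is parallel on $\textrm{End}(E)$.

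First I would record the effect of the rescaling. The adjoint of the Higgs field is unchanged, since
\[
\tilde h(\phi s,t)=e^{u}h(\phi s,t)=e^{u}h(s,\phi^{*}t)=\tilde h(s,\phi^{*}t),
\]
so $[\phi,\phi^{*_{\tilde h}}]=[\phi,\phi^{*_{h}}]$. The Chern connection form in a holomorphic frame transforms by $\tilde h^{-1}\partial \tilde h=h^{-1}\partial h+\partial u\cdot Id_{E}$, so $d_{\tilde A}=d_{A}+\partial u\cdot Id_{E}$ on $E$; in particular, on $\textrm{End}(E)$-valued forms the induced Chern connection is unchanged, $d_{\tilde A}^{\prime}=d_{A}^{\prime}$, because $[\partial u\cdot Id_{E},\,\cdot\,]=0$. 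The curvature then satisfies $F_{\tilde A}=F_{A}+\bar\partial\partial u\cdot Id_{E}$, whence
\[
\tilde\Theta:=F_{\tilde A}+[\phi,\phi^{*_{\tilde h}}]=\Theta+\bar\partial\partial u\cdot Id_{E}.
\]

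Next I would apply $\Lambda$ and use the Kähler identity $[\Lambda,\bar\partial]=-i\partial^{*}$ together with $\Lambda\partial u=0$ to obtain $\Lambda\bar\partial\partial u=-i\partial^{*}\partial u=-\tfrac{i}{2}\Delta u$ (the real Laplacian of $u$). Setting $f:=-\tfrac{i}{2}\Delta u$, we get $\Lambda\tilde\Theta=\Lambda\Theta+f\cdot Id_{E}$. Writing the Yang-Mills-Higgs equation \eqref{eq:p} for $(\tilde A,\phi)$ and subtracting the same equation for $(A,\phi)$, and using $d_{\tilde A}^{\prime}=d_{A}^{\prime}$ on $\textrm{End}(E)$, $d_{A}^{\prime}(f\cdot Id_{E})=(\partial f)\cdot Id_{E}$ and $[\phi,f\cdot Id_{E}]=0$, the difference collapses to
\[
(\partial f)\cdot Id_{E}=0,\qquad\text{i.e.}\qquad \partial(\Delta u)=0.
\]

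Finally I would conclude. Since $\Delta u$ is real, $\partial(\Delta u)=0$ implies $d(\Delta u)=0$, so $\Delta u$ is a real constant on $X$. Compactness of $X$ gives $\int_{X}\Delta u\,dV=0$, forcing $\Delta u\equiv 0$, and a harmonic function on a compact connected Kähler manifold is constant, say $u\equiv c$. Hence $\tilde h=e^{c}h$ is a (constant) rescaling of $h$. The argument is essentially a bookkeeping exercise; the only point that requires care is the observation that the scalar-identity nature of the conformal perturbation makes every nonlinear term in \eqref{eq:p} drop out on subtraction, leaving a linear elliptic equation for $u$. If any step is delicate, it is verifying that the Kähler-identity-based reduction $\Lambda\bar\partial\partial u\propto\Delta u$ is enough to conclude $u$ constant, which it is precisely because of the compactness of $X$.
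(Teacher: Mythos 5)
Your proof is correct and follows essentially the same route as the paper: both compute that the conformal change $\tilde h=e^{u}h$ shifts $i\Lambda\Theta$ by the scalar term $\tfrac12\Delta u\cdot Id_E$ while leaving the commutator side of \eqref{eq:p} unchanged, and then conclude that $u$ must be constant on the compact manifold. Your write-up is more detailed (explicitly checking $\phi^{*_{\tilde h}}=\phi^{*_h}$, the invariance of the induced connection on $\mathrm{End}(E)$, and the integration step forcing $\Delta u\equiv 0$) than the paper's terse version, but the idea is identical.
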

\begin{proof}
We write $\tilde{h}=fh$ with the conformal  factor $f $ being a positive smooth function on $X$. Then
\begin{align*}
  i\Lambda\tilde{\Theta}&=\Lambda((d^\prime_A+(\tilde{h})^{-1}d^\prime_A\tilde{h}+d_A^{\prime\prime})^2+[\phi,\phi^{*_{\tilde{h}}}])\\
  &=i\Lambda(\Theta+\bar\partial(f^{-1}\partial f)Id)\\&=i\Lambda\Theta+(\frac{1}{2}\Delta\ln f)Id,
\end{align*}
which shows that the right hand sides of \eqref{eq:p} are conformal invariant.  Thereby the exact form $d\ln f$ is harmonic, so it has to vanish, thus $f$ is a constant.
\end{proof}

\begin{proposition}\label{q}A Hitchin  pair $(A,\phi)$ is a strong Yang-Mills-Higgs pair if and only if it satisfies the equation \begin{align}\label{eq:i}
 \mathcal{D}^*_{(A,\phi)}(F_A+[\phi,\phi^*]+d_A^\prime\phi+d_A^{\prime\prime}\phi^*)&=0.
\end{align}
\end{proposition}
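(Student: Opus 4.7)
The plan is to identify both $\mathcal{D}^*_{(A,\phi)}\mathcal{R}_{(A,\phi)} = 0$ in \eqref{eq:i} and the four strong Yang-Mills-Higgs conditions \eqref{eq:o} as the Euler-Lagrange equation of the Yang-Mills-Higgs functional \eqref{eq:11}, derived in two different ways that must therefore agree.

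First, I would compute the first variation of $YMH(A,\phi) = \int_X |\mathcal{R}_{(A,\phi)}|^2 dV$ along an admissible variation $\delta \mathcal{D}_{(A,\phi)} = \delta A + \delta\phi + \delta\phi^*$ of the Hitchin pair, in which $\delta A$ is an anti-Hermitian $\textrm{End}(E)$-valued 1-form and $\delta\phi$ is an unrestricted $(1,0)$-form with values in $\textrm{End}(E)$. The standard variational identity $\delta \mathcal{R}_{(A,\phi)} = \mathcal{D}_{(A,\phi)}(\delta \mathcal{D}_{(A,\phi)})$, together with integration by parts (with the formal adjoint identified via the K\"{a}hler identities \eqref{eq:a}), yields
$$\delta YMH = 2 \operatorname{Re} \int_X \langle \delta \mathcal{D}_{(A,\phi)}, \mathcal{D}^*_{(A,\phi)} \mathcal{R}_{(A,\phi)} \rangle.$$
Parametrizing $\delta \mathcal{D}_{(A,\phi)}$ by $a := (\delta A)^{1,0}$ and $b := \delta\phi$, its bidegree decomposition reads $(\delta \mathcal{D}^{1,0}, \delta \mathcal{D}^{0,1}) = (a + b,\, -a^* + b^*)$. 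As $a$ and $b$ range independently over all $(1,0)$ End-valued forms, this parametrization is a linear isomorphism onto the space of bidegree-decomposed End-valued 1-forms; hence the vanishing of $\delta YMH$ for every admissible variation forces both bidegree components of $\mathcal{D}^*_{(A,\phi)} \mathcal{R}_{(A,\phi)}$ to vanish. In particular, the critical points of $YMH$ are precisely the Hitchin pairs satisfying \eqref{eq:i}.

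Second, I would redo the same first variation by splitting into the $A$- and $\phi$-directions separately, thereby reproducing the gradient-flow system displayed in the introduction: the $A$-gradient is $(d_A^{\prime\prime} - d_A^\prime)\mathcal{K}_{(A,\phi)}$ and the $\phi$-gradient is $[\phi, \mathcal{K}_{(A,\phi)}]$. Critical points of $YMH$ therefore also satisfy $(d_A^{\prime\prime} - d_A^\prime)(\Lambda \Theta_{(A,\phi)}) = 0$ and $[\phi, \Lambda \Theta_{(A,\phi)}] = 0$. The first equation splits by bidegree, since $d_A^\prime(\Lambda \Theta)$ is of type $(1,0)$ while $d_A^{\prime\prime}(\Lambda \Theta)$ is of type $(0,1)$; so $d_A^\prime(\Lambda \Theta) = 0$ and $d_A^{\prime\prime}(\Lambda \Theta) = 0$ hold individually. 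Since $\mathcal{K}_{(A,\phi)} = i\Lambda \Theta_{(A,\phi)}$ is Hermitian, $\Lambda \Theta_{(A,\phi)}$ is anti-Hermitian, and taking the Hermitian adjoint of $[\phi, \Lambda \Theta] = 0$ yields $[\phi^*, \Lambda \Theta] = 0$ automatically. These are exactly the four equations of \eqref{eq:o}; the converse direction (strong YMH pair implies the two critical-point equations above) is immediate.

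The main obstacle is to verify, in the first step, that the constrained admissible variations really span enough tangent directions to force the whole of $\mathcal{D}^*_{(A,\phi)} \mathcal{R}_{(A,\phi)}$ to vanish, rather than merely some projection of it. This reduces to the elementary algebraic fact that the map $(a,b) \mapsto (a+b,\, -a^* + b^*)$ from pairs of $(1,0)$ End-valued forms to bidegree-decomposed End-valued 1-forms is a linear isomorphism, so that the $(1,0)$ and $(0,1)$ parts of $\mathcal{D}^*_{(A,\phi)} \mathcal{R}_{(A,\phi)}$ are independently tested by the variation. Once this decoupling is in hand, the equivalence $\mathcal{D}^*_{(A,\phi)} \mathcal{R}_{(A,\phi)} = 0 \Leftrightarrow \eqref{eq:o}$ follows by the two computations of the same Euler-Lagrange equation.
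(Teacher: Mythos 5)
Your overall architecture --- characterize the critical points of $YMH$ in two different ways and equate the two characterizations --- is logically sound, and your first computation (critical points satisfy \eqref{eq:i}, with the surjectivity of $(a,b)\mapsto(a+b,\,-a^*+b^*)$ correctly handled as a real-linear isomorphism) is fine. The gap is in the second computation. The flow equations $\frac{\partial A}{\partial t}=(d^{\prime\prime}_A-d^\prime_A)\mathcal{K}_{(A,\phi)}$, $\frac{\partial \phi}{\partial t}=[\phi,\mathcal{K}_{(A,\phi)}]$ are only \emph{asserted} in the introduction, not derived there, and deriving them is precisely the analytic content of the proposition: one must show that in $\delta YMH=2\,\mathrm{Re}\langle\delta\mathcal{R}_{(A,\phi)},\mathcal{R}_{(A,\phi)}\rangle$ the contributions of the $(2,0)$- and $(0,2)$-parts $d^\prime_A\phi$ and $d^{\prime\prime}_A\phi^*$ of $\mathcal{R}_{(A,\phi)}$ and all cross terms collapse, leaving only $d^\prime_A(\Lambda\Theta)$ and $[\phi,\Lambda\Theta]$. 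That collapse uses the K\"{a}hler identities \eqref{eq:a}, the Bianchi identity, and the Higgs conditions $d^{\prime\prime}_A\phi=0$, $\phi\wedge\phi=0$ --- exactly the manipulations the paper performs when it establishes
$\mathcal{D}^*_{(A,\phi)}(F_A+[\phi,\phi^*]+d_A^\prime\phi+d_A^{\prime\prime}\phi^*)=i(d_A^\prime-d_A^{\prime\prime})(\Lambda\Theta)-i[\phi,\Lambda\Theta]+i[\phi^*,\Lambda\Theta]$.
So your ``second, independent derivation'' presupposes the identity that carries the whole burden of the proof; as written, the hard step is assumed rather than performed.

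That said, there is one place where your idea, if completed, would genuinely improve on the paper. The paper first reduces \eqref{eq:i} to the coupled system $d^\prime_A(\Lambda\Theta)=[\phi,\Lambda\Theta]$, $d^{\prime\prime}_A(\Lambda\Theta)=[\phi^*,\Lambda\Theta]$, and then needs a separate integration-by-parts argument ($\|d^{\prime\prime}_A(\Lambda\Theta)\|^2=\cdots=-\|[\phi,\Lambda\Theta]\|^2$, forcing both sides to vanish) to split this into the four equations of \eqref{eq:o}. In your parametrization the variations $a=(\delta A)^{1,0}$ and $b=\delta\phi$ are independent, and once the gradient is known the first variation takes the form $\mathrm{Re}\bigl(\langle a,d^\prime_A\mathcal{K}_{(A,\phi)}\rangle-\langle b,[\phi,\mathcal{K}_{(A,\phi)}]\rangle\bigr)$ up to a constant, so $d^\prime_A\mathcal{K}_{(A,\phi)}=0$ and $[\phi,\mathcal{K}_{(A,\phi)}]=0$ follow separately and the remaining two equations follow by taking adjoints (as you note, $\mathcal{K}_{(A,\phi)}$ is Hermitian); this would replace the paper's norm trick. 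But to reach that point you must actually carry out the variational computation in the $(\delta A,\delta\phi)$-splitting, and doing so amounts to redoing the paper's computation of $\mathcal{D}^*_{(A,\phi)}\mathcal{R}_{(A,\phi)}$ in different notation.
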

\begin{proof}
By manipulations of the K\"{a}hler identities with Higgs field \eqref{eq:a}, we obtain
\begin{align*}
 &\mathcal{D}^*_{(A,\phi)}(F_A+[\phi,\phi^*]+d_A^\prime\phi+d_A^{\prime\prime}\phi^*)\\=&i([\Lambda,\mathcal{D}^{\prime\prime}_{(A,\phi)}]-[\Lambda,\mathcal{D}^{\prime}_{(A,\phi)}])(\Theta+d_A^\prime\phi+d_A^{\prime\prime}\phi^*)\\
=&i(\mathcal{D}^{\prime}_{(A,\phi)}-\mathcal{D}^{\prime\prime}_{(A,\phi)})(\Lambda\Theta)+i\Lambda(\mathcal{D}^{\prime\prime}_{(A,\phi)}-\mathcal{D}^{\prime}_{(A,\phi)})(\Theta+d_A^\prime\phi+d_A^{\prime\prime}\phi^*)
\end{align*}
From the Banchi identity for $A$, we have
\begin{align*}
  &(\mathcal{D}^{\prime\prime}_{(A,\phi)}-\mathcal{D}^{\prime}_{(A,\phi)})(\Theta+d_A^\prime\phi+d_A^{\prime\prime}\phi^*)\\
  =&[\phi-\phi^*, \Theta]+(d_A^{\prime\prime}-d_A^{\prime})[\phi,\phi^*]+[\phi-\phi^*,d_A^\prime\phi+d_A^{\prime\prime}\phi^*]+d_A^{\prime\prime}d_A^{\prime}\phi-d_A^{\prime}d_A^{\prime\prime}\phi^*\\
  =&[\phi,F_A]-[\phi^*,F_A]+ [F_A,\phi]-[F_A,\phi^*]\\
 =&0,
\end{align*}
where the second equality is due to $[\phi,d_A^{\prime}\phi]=-d_A^{\prime}(\phi\wedge\phi)=0$ and the Jacobi identity which implies $[\phi,[\phi,\phi^*]]=[\phi\wedge \phi,\phi^*]=0$. Therefore,
\begin{align*}
  &\mathcal{D}^*_{(A,\phi)}(F_A+[\phi,\phi^*]+d_A^\prime\phi+d_A^{\prime\prime}\phi^*)\\=&
  i(d_A^\prime-d_A^{\prime\prime})(\Lambda\Theta)-i[\phi,\Lambda\Theta]+i[\phi^*,\Lambda\Theta].
\end{align*}
 As a result,  the equations \eqref{eq:i} reduce to
\begin{eqnarray}\label{eq:n}
\begin{aligned}
  d^\prime_A(\Lambda\Theta)&=[\phi,\Lambda\Theta],\\
  d^{\prime\prime}_A(\Lambda\Theta)&=[\phi^*,\Lambda\Theta].
\end{aligned}\end{eqnarray}
We have to show that both sides of the equalities in \eqref{eq:n} vanish. Indeed, by K\"{a}hler identities we calculate
\begin{align*}
  ||d^{\prime\prime}_A(\Lambda\Theta)||^2&=\langle d^{\prime\prime}_A(\Lambda\Theta),[\phi^*,\Lambda\Theta]\rangle=\langle \Lambda\Theta, (d^{\prime\prime}_A)^*([\phi^*,\Lambda\Theta])\rangle\\
  &=\langle \Lambda\Theta, -i[\Lambda, d^{\prime}_A]([\phi^*,\Lambda\Theta])\rangle=i\langle \Lambda\Theta, \Lambda ([\phi^*, d^{\prime}_A\Lambda\Theta])\rangle\\
  &=i\langle \Lambda\Theta, \Lambda ([\phi^*, [\phi,\Lambda\Theta]])\rangle\\
  &=i\langle \Lambda\Theta, \Lambda ([\phi, [\Lambda\Theta,\phi^*]])\rangle-i\langle \Lambda\Theta, [\Lambda\Theta,\Lambda([\phi,\phi^*])]\rangle\\
  &=i\langle \Lambda\Theta, [\Lambda,d^{\prime\prime}_A]([\phi, \Lambda\Theta])\rangle\\
  &=-\langle d^{\prime}_A(\Lambda\Theta), [\phi, \Lambda\Theta]\rangle=-||[\phi, \Lambda\Theta]||^2=0,
\end{align*}
where the Jacobi identity has been applied to the forth equality, and the anti-Hermiticity of $\Theta$ yields $\langle \Lambda\Theta, [\Lambda\Theta,\Lambda([\phi,\phi^*])]\rangle=-\textrm{Re}\langle [\Lambda\Theta,\Lambda\Theta],\Lambda([\phi,\phi^*])\rangle=0$. This completes the proof.
\end{proof}

\begin{definition}A global  holomorphic section $s$ of the Higgs bundle $(E,\phi)$ is called $(\phi,\kappa)$-invariant if for the holomorphic 1-form $\kappa$ the equality $\phi(s) =\kappa\otimes s$ holds.
\end{definition}
\begin{proposition}Let  $(A,\phi)$ be  a  strong Yang-Mills-Higgs pair associated with a  Yang-Mills-Higgs system  $\{(E,\phi), h, d_A\}$.  If
   the Higgs field $\phi$ admits a non-trivial
$(\phi,\kappa)$-invariant section $s$,  then  we have the inequality
\begin{align*}
  ||\widetilde{d^\prime_A s}+\bar\kappa\tilde s||^2\geq||\kappa\tilde s||^2,
\end{align*}
where $\tilde s:=\Lambda\Theta(s)$. Obviously, when $s$ is covariant constant, the equality holds.
\end{proposition}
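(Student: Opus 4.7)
The plan is to use the four strong Yang--Mills--Higgs equations \eqref{eq:o} to transfer the properties of $s$ to the section $\tilde s=\Lambda\Theta(s)$, and then combine an $L^2$-orthogonality argument with the pointwise identity $|\bar\kappa|^2=|\kappa|^2$ to deduce the inequality.

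First I would apply the Leibniz rule to the defining relation $\tilde s=\Lambda\Theta(s)$. Using $d^{\prime\prime}_A(\Lambda\Theta)=0$ and the holomorphicity $d^{\prime\prime}_A s=0$ gives $d^{\prime\prime}_A\tilde s=0$, so $\tilde s$ is itself holomorphic; using $d^\prime_A(\Lambda\Theta)=0$ gives $d^\prime_A\tilde s=\Lambda\Theta(d^\prime_A s)=\widetilde{d^\prime_A s}$, so the tilde operation commutes with $d^\prime_A$. Finally the commutator condition $[\Lambda\Theta,\phi]=0$, together with $\phi(s)=\kappa\otimes s$, gives $\phi(\tilde s)=\Lambda\Theta(\phi(s))=\kappa\otimes\tilde s$, so $\tilde s$ is itself a holomorphic $(\phi,\kappa)$-invariant section.

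Next I would decompose by form type. Note that $\widetilde{d^\prime_A s}$ lies in $\Omega^{1,0}(E)$ while $\bar\kappa\,\tilde s$ lies in $\Omega^{0,1}(E)$, and these are $L^2$-orthogonal summands of $\Omega^1(E)$ under the Hermitian metric on $E\otimes\Omega^1_X$ induced by $\omega$ and $h$. Pythagoras then yields
\[
\|\widetilde{d^\prime_A s}+\bar\kappa\,\tilde s\|^2 \;=\; \|\widetilde{d^\prime_A s}\|^2 + \|\bar\kappa\,\tilde s\|^2.
\]
Since pointwise complex conjugation $\Omega^{1,0}\to\Omega^{0,1}$ is an isometry of the K\"ahler-Hermitian metric, $|\bar\kappa|^2=|\kappa|^2$ and hence $\|\bar\kappa\,\tilde s\|^2=\|\kappa\,\tilde s\|^2$. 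The claim then reduces to $\|\widetilde{d^\prime_A s}\|^2\geq 0$, which is immediate. Equality occurs precisely when $\Lambda\Theta(d^\prime_A s)$ vanishes identically on $X$, which is automatic when $s$ is covariant constant since then $d^\prime_A s=0$.

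The only genuine step is the four commutation identities coming from the strong YMH conditions; no K\"ahler identity or Bochner estimate is needed. If a sharper identity were sought, one could alternatively apply \eqref{eq:a} to compute $\|\mathcal{D}^\prime_{(A,\phi)}\tilde s\|^2-\|\mathcal{D}^{\prime\prime}_{(A,\phi)}\tilde s\|^2=\langle\tilde s,\mathcal{K}_{(A,\phi)}\tilde s\rangle$, but this is strictly stronger than what the proposition asserts.
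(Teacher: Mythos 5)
Your argument is correct for the statement as written, but it is a genuinely different route from the paper's. The paper never decomposes by form type: it first proves, via the K\"ahler identity $\Delta_{d'_A}-\Delta_{d''_A}=i[\Lambda,F_A]$, that $\|d'_A t\|^2=\langle i\Lambda F_A t,t\rangle\geq 0$ for any holomorphic section $t$, upgrades this to $\langle i\Lambda\Theta\,t,t\rangle\geq0$ for $(\phi,\kappa)$-invariant $t$, and then applies the Higgs-field analogue $\Delta_{\mathcal{D}'_{(A,\phi)}}-\Delta_{\mathcal{D}''_{(A,\phi)}}=i[\Lambda,\mathcal{R}_{(A,\phi)}]$ to $\tilde s$ to get $\|\mathcal{D}'_{(A,\phi)}\tilde s\|^2\geq\|\mathcal{D}''_{(A,\phi)}\tilde s\|^2$, which it identifies with the displayed inequality. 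So your closing remark ("one could alternatively apply \eqref{eq:a}\dots") is in fact the paper's actual proof, not a strengthening of it. What your route buys is economy: the first half of your argument (the three Leibniz/commutator computations from \eqref{eq:o}) coincides with the paper's identification of $\tilde s$ as a holomorphic $(\phi,\kappa)$-invariant section, and after that Pythagoras plus $|\bar\kappa|=|\kappa|$ finishes in one line. What the paper's route buys is the exact value of the defect, $\|\widetilde{d'_A s}\|^2=\langle i\Lambda\mathcal{R}_{(A,\phi)}\tilde s,\tilde s\rangle$, i.e.\ a curvature interpretation of the gap. One caution you should register: your Pythagoras step rests entirely on reading $\widetilde{d'_A s}$ as a $(1,0)$-form and $\bar\kappa\tilde s$ as a $(0,1)$-form, which makes the inequality hold for \emph{any} pair of such forms --- none of the hypotheses (strong YMH, holomorphy of $s$, $(\phi,\kappa)$-invariance) is actually used in your final estimate, only in interpreting the left side as $\mathcal{D}'_{(A,\phi)}\tilde s$. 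If the intended statement had both terms of type $(1,0)$ (i.e.\ $\kappa$ in place of $\bar\kappa$ on the left, matching the literal "(1,0)-part'' reading of $\mathcal{D}'_{(A,\phi)}=d'_A+\phi$), the cross term would not vanish, Pythagoras would fail, and the Bochner/K\"ahler-identity route would be unavoidable; so your proof is tied to the mixed-type (Simpson) convention $\mathcal{D}'_{(A,\phi)}=d'_A+\phi^*$ that the displayed formula suggests.
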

\begin{proof} It is known that the K\"{a}hler identities imply that
 \begin{align}\label{s}
 \Delta_{d^\prime_A}-\Delta_{d^{\prime\prime}_A}=i[\Lambda, F_A].
 \end{align}
 Therefore for any global holomorphic section $s $ we have
 \begin{align}\label{ss}
  ||d^\prime_As||^2=<i\Lambda F_A s,s>\geq0.
\end{align}
Moreover if $s$ is $(\phi,\kappa)$-invariant, then $<i\Lambda\Theta s,s>\geq0$ since $i\Lambda\Theta (s)=i\Lambda F_A (s)$.
 Now $(A,\phi)$ is   a strong Yang-Mills-Higgs pair
 then $\Lambda\Theta(s)$ is also a  $(\phi,\kappa)$-invariant section since  $d^{\prime\prime}_A(\Lambda\Theta)=[\phi,\Lambda\Theta]=0$.
 By the K\"{a}hler identities with Higgs field we have an analog of \eqref{s}
\begin{align}\label{sss}
\Delta_{\mathcal{D}^\prime_{(A,\phi)}}-\Delta_{\mathcal{D}^{\prime\prime}_{(A,\phi)}}=i[\Lambda, \mathcal{R}_{(A,\phi)}].
\end{align}
   Applying \eqref{ss} and \eqref{sss} to $\Lambda\Theta(s)$ yields
\begin{align*}
  ||\mathcal{D}^\prime_{(A,\phi)}\Lambda\Theta(s)||^2\geq ||\mathcal{D}^{\prime\prime}_{(A,\phi)}\Lambda\Theta(s)||^2.
\end{align*}
This is exactly the inequality in the proposition.
\end{proof}
\begin{proposition}\label{eq:y}If a  Yang-Mills-Higgs system  $\{(E,\phi), h, d_A\}$ over an $n$-dimensional   K\"{a}hler manifold $(X,\omega)$ admits   a  strong degenerate Yang-Mills-Higgs pair $(A,\phi)$, then $(A,\phi)$ must be  a Hermitian-Yang-Mills-Higgs pair if $(E,\phi)$ is  a semistable  Higgs bundle.
\end{proposition}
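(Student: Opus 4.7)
The plan is to convert the strong Yang-Mills-Higgs equations \eqref{eq:o} into an orthogonal Higgs-bundle decomposition of $(E,\phi)$ along the eigenspaces of $i\Lambda\Theta$, and then use the degeneracy hypothesis together with semistability to collapse all the eigenvalues to zero.

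\emph{Step 1 (spectral decomposition).} Since $\Theta=F_A+[\phi,\phi^*]$ is skew-Hermitian, the endomorphism $i\Lambda\Theta\in\tn{End}(E)$ is self-adjoint for $h$ and therefore pointwise diagonalizable with real eigenvalues. The first line of \eqref{eq:o} reads $d_A^\prime(i\Lambda\Theta)=d_A^{\prime\prime}(i\Lambda\Theta)=0$, so $i\Lambda\Theta$ is globally $d_A$-parallel; the second line reads $[i\Lambda\Theta,\phi]=[i\Lambda\Theta,\phi^*]=0$. Parallelism on the connected compact manifold $X$ forces the distinct eigenvalues $\lambda_1,\dots,\lambda_k$ to be constants and the corresponding eigenbundles $E_{\lambda_i}$ to be smooth parallel, hence holomorphic, subbundles. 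Commutation with $\phi$ then yields $\phi(E_{\lambda_i})\subset E_{\lambda_i}\otimes\Omega^1_X$, and we obtain an orthogonal direct-sum decomposition $(E,\phi)=\bigoplus_{i}(E_{\lambda_i},\phi|_{E_{\lambda_i}})$ in the category of Higgs bundles.

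\emph{Step 2 (slopes of summands).} Since each $E_{\lambda_i}$ is parallel the induced Chern connection is just the restriction and there are no second-fundamental-form contributions; together with $\Tr([\phi,\phi^*])\equiv 0$, standard Chern--Weil gives
\begin{equation*}
\deg_\omega(E_{\lambda_i})\;=\;\frac{1}{2\pi n}\int_X \Tr\!\bigl(i\Lambda\Theta|_{E_{\lambda_i}}\bigr)\,\omega^n\;=\;\frac{\lambda_i\,\rk(E_{\lambda_i})\int_X\omega^n}{2\pi n},
\end{equation*}
so $\mu_\omega(E_{\lambda_i})$ is proportional to $\lambda_i$ with the same positive constant of proportionality for every $i$.

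\emph{Step 3 (degeneracy plus semistability).} The degeneracy hypothesis $\det(i\Lambda\Theta)(x)=0$ for some $x\in X$, combined with the constancy of the eigenvalues, shows that some $\lambda_{i_0}=0$. On the other hand $\mu_\omega(E)$ is the rank-weighted average of the $\mu_\omega(E_{\lambda_i})$, while semistability of $(E,\phi)$ forces $\mu_\omega(E_{\lambda_i})\leq\mu_\omega(E)$ for each Higgs subbundle $E_{\lambda_i}$; the only way both conditions can be compatible is that $\mu_\omega(E_{\lambda_i})=\mu_\omega(E)$ for every $i$, whence (by Step 2) all the $\lambda_i$ coincide. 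Combined with $\lambda_{i_0}=0$ this forces $\lambda_i\equiv 0$, so $i\Lambda\Theta\equiv 0$; since then $\mu_\omega(E)=0$ as well, this is exactly the Hermitian-Yang-Mills-Higgs equation \eqref{po} with constant $\lambda=0=\tfrac{2\pi n}{\int_X\omega^n}\mu_\omega(E)$. The step I expect to be most delicate is Step~1: one has to verify that the eigenspace decomposition is genuinely a decomposition in the Higgs category (orthogonal, parallel, holomorphic, and $\phi$-stable), which uses self-adjointness, $d_A$-parallelism, and commutation with $\phi$ \emph{simultaneously}. Once this structural decomposition is secured, the slope and semistability manipulations are routine.
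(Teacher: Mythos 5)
Your proof is correct and follows the same basic strategy as the paper's: use the strong equations to show that $\Lambda\Theta$ is $d_A$-parallel and commutes with $\phi$, split $E$ accordingly into holomorphic Higgs subbundles with vanishing second fundamental forms, compute degrees by Chern--Weil using $\Tr([\phi,\phi^*])=0$, and then play the resulting slopes off against semistability. The one genuine difference is the granularity of the splitting. The paper decomposes only into $F=\mathrm{im}(\Lambda\Theta)$ and $K=\ker(\Lambda\Theta)$, obtains $\deg_\omega(K)=0$ and $\deg_\omega(F)=\deg_\omega(E)$, and concludes $E=K$; as written, that two-block argument only yields $\deg_\omega(E)=0$ and $\deg_\omega(F)=0$, which does not by itself force $F=0$, since the nonzero constant eigenvalues of $i\Lambda\Theta$ on $F$ could a priori have mixed signs and a vanishing trace. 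Your full spectral decomposition closes exactly this case: each eigenbundle $E_{\lambda_i}$ is separately a Higgs subbundle with $\mu_\omega(E_{\lambda_i})$ proportional to $\lambda_i$, so semistability bounds every eigenvalue individually, the rank-weighted average pins them all to $\mu_\omega(E)$, and the degeneracy hypothesis then kills them all. So your route is a refinement rather than a departure, and it is the more airtight of the two.
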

\begin{proof}Define
$F:=\Lambda\Theta(E)$ that is a proper holomorphic subbundle of $E$ since $\Lambda\Theta$ is degenerate and covariantly constant. The commutativity
of $\Lambda\Theta$ and $\phi$ guarantees that $F$ is a Higgs subbundle. Similarly $K:=\ker(\Lambda\Theta)$ is also a non-trivial Higgs subbundle of $E$. There is an orthogonal decomposition of $E$ as a $C^\infty$-bundle with respect to the Hermitian metric $h$: $E=F\oplus F^\bot$. However,  we have an isomorphism of $C^\infty$-bundles: $ F^\bot\simeq K$. In fact, if $u\in K$, then for any $v=\Lambda\Theta(s)\in F$ we have $h(u,v)=-h(\Lambda\Theta (u),s)=0$, i.e. $K\subset F^\bot$,  conversely,  $F^\bot\subset K$ is also due to the anti-Hermiticity of $\Theta$. So this $C^\infty$-decomposition is actually a holomorphic decomposition, which means that the second fundamental forms of  the subbundles $F$ and $K$ vanish. Hence $R_K=\pi_KR_E\pi_K$ where $R_E$ and $R_K$ are curvatures corresponding to connections $d_A$ on $E$ and $\pi_Kd_A\pi_K$ on $K$ respectively, and $\pi_K$ stands for the  projection to $K$. Therefore, we have
\begin{align*}
 \deg_\omega(K)&=\frac{i}{2\pi n}\int_X\textrm{Tr}(\Lambda R_K)\omega^{n} \nonumber\\
  &=-\frac{i}{2\pi n}\int_X\textrm{Tr}_K(\Lambda [\phi,\phi^*])\omega^{n}=-\frac{i}{2\pi }\int_X\textrm{Tr}_K([\phi,\phi^*])\wedge\omega^{n-1}=0,
 \end{align*}
which implies
  $\deg_\omega(F) =\deg_\omega(E)- \deg_\omega(K)= \deg_\omega(E)$,
 thus $\mu_\omega (F)\geq\mu_\omega(E)$. This  shows that $E=K$ if $(E,\phi)$ is a semistable Higgs bundle, then  $(A,\phi)$ is a Hermitian-Yang-Mills-Higgs pair.
 \end{proof}

 \begin{corollary}Suppose a  Yang-Mills-Higgs system  $\{(E,\phi), h, d_A\}$ over an $n$-dimensional   K\"{a}hler manifold $(X,\omega)$ admits   a  strong degenerate Yang-Mills-Higgs pair $(A,\phi)$.
 \begin{enumerate}
                                         \item Assume $E$ is of rank   2, if $deg_\omega(E)=0$, then $(A,\phi)$ is a Hermitian-Yang-Mills-Higgs pair, and if $deg_\omega(E)\neq0$,  the  Harder-Narasimhan filtration  associated with $(E,\phi)$ is  exactly $0\subset F\subset E$.
 \item  If $(E,\phi)$ is a semistable Higgs bundle over $X=\mathbb{P}^1$, then $E$ decomposes orthogonally into the direct sum of trivial Higgs line bundles.
 \item  Assume  the rank of $E$ is not less than 2, and  $X$ is an elliptic curve, then $(E,\phi)$ cannot be a stable Higgs bundle.
                                        \end{enumerate}
 \end{corollary}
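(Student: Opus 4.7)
The plan is to extract all three conclusions from Proposition~\ref{eq:y} together with the rigid eigenbundle decomposition of $E$ constructed in its proof. Under the strong Yang--Mills--Higgs hypothesis, $\Lambda\Theta$ is Hermitian, $d_A$-parallel, and commutes with $\phi$, so on the connected K\"{a}hler manifold $X$ its eigenvalues are real constants and $E$ splits $h$-orthogonally as a holomorphic (hence Higgs) direct sum of eigenbundles; degeneracy guarantees that $K := \ker(\Lambda\Theta)$ is a nonzero Higgs subbundle, and the proof of Proposition~\ref{eq:y} already records $\deg_\omega(K)=0$.

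For~(1), $\rank(E)=2$ together with degeneracy forces $\Lambda\Theta$ to have spectrum $\{\mu,0\}$, so the eigenbundle decomposition reads $E = F \oplus K$ with $F := \Lambda\Theta(E)$ and $K := \ker(\Lambda\Theta)$ both Higgs line bundles. Tracing gives $\deg_\omega(E) = \tfrac{\mu}{2\pi n}\int_X \omega^n$, so $\deg_\omega(E)=0$ yields $\mu=0$ and therefore $\Lambda\Theta \equiv 0$, which is precisely the Hermitian--Yang--Mills--Higgs equation with $\lambda=0$. When $\deg_\omega(E)\neq 0$, we instead have $\mu\neq 0$, $\deg_\omega(F)=\deg_\omega(E)$ and $\deg_\omega(K)=0$; a slope comparison then identifies the maximal destabilizing Higgs line subbundle and shows that the Harder--Narasimhan filtration has length two and takes the asserted form.

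For~(2), Proposition~\ref{eq:y} promotes $(A,\phi)$ to a Hermitian--Yang--Mills--Higgs pair, and degeneracy forces $\lambda=0$, so $c_1(E)=0$. On $\mathbb{P}^1$, Grothendieck's splitting theorem combined with $\omega$-semistability forces $E \simeq \mathcal{O}^r$, and since $H^0(\mathbb{P}^1,\mathcal{O}(-2))=0$ the Higgs field must vanish identically; the Chern connection is then flat on the simply connected base, and diagonalizing the resulting constant Hermitian form produces the orthogonal decomposition into trivial Higgs line subbundles. For~(3) the same route yields $c_1(E)=0$, and $c_2(E)=0$ automatically since $X$ is a curve; Simpson's correspondence, recalled in the introduction, then identifies the stable Higgs bundle $(E,\phi)$ with an irreducible complex representation of $\pi_1(X)=\mathbb{Z}^2$, but every irreducible complex representation of an abelian group is one-dimensional, contradicting $\rank(E)\geq 2$.

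The main obstacle I anticipate is the bookkeeping in~(1): the eigen-decomposition produces two candidate destabilizing line bundles $F$ and $K$, and which of them has the larger slope flips with the sign of $\deg_\omega(E)$. One must therefore either read the symbol $F$ in the statement as denoting whichever summand is destabilizing, or split the argument into the two sign cases and relabel accordingly.
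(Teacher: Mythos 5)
Your proposal is correct in substance and, for parts (1) and (2), largely parallels the paper's route: the paper also works with the parallel, $\phi$-commuting endomorphism $\Lambda\Theta$ and its orthogonal holomorphic eigen-decomposition $E=F\oplus K$ from the proof of Proposition~2.8, deriving (1) from the fact that a non-Hermitian degenerate pair forces $i\Lambda\Theta|_F$ to be a nonzero constant, hence $\deg_\omega(F)=\deg_\omega(E)\neq 0$. The sign issue you flag in (1) is real: the paper silently takes $F$ to be the destabilizing summand, which is only literally correct when $\deg_\omega(E)>0$ (for $\deg_\omega(E)<0$ the degree-zero summand $K$ is the maximal destabilizer), so your case split is a genuine repair of the statement rather than an artifact of your method. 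In (2) you diverge only at the endgame: the paper establishes orthogonality of the splitting by induction, using $i\Lambda\Theta|_L=\tfrac{1}{2}\Delta\ln f+|\alpha|^2=0$ and Hopf's maximum principle to kill the second fundamental form, whereas you observe that $\phi=0$ and $\Lambda F_A=0$ on a curve give a flat unitary connection over the simply connected $\mathbb{P}^1$; this is cleaner and equally valid. One small gap there: Higgs-semistability does not by itself give semistability of the underlying bundle, so ``Grothendieck plus semistability forces $E\simeq\mathcal{O}^{\oplus r}$'' needs the extra (easy) remark that the maximal-degree summand $\mathcal{O}(m_{\max})$ is automatically $\phi$-invariant because $\mathrm{Hom}(\mathcal{O}(m_{\max}),\mathcal{O}(m_j)\otimes\mathcal{O}(-2))=0$ for $m_j\leq m_{\max}$; the paper spells this step out. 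Part (3) is where you genuinely depart: the paper runs a lengthy case analysis through Atiyah's classification of bundles on an elliptic curve (forcing $E$ to split, exploiting the multiplicative structure of the indecomposable factors, and finally the Harder--Narasimhan filtration), while you extract $\deg_\omega(E)=0$ from Hermitian-Yang-Mills-Higgs plus degeneracy and then invoke Simpson's correspondence to identify a stable degree-zero Higgs bundle on a curve with an irreducible representation of $\pi_1(X)=\mathbb{Z}^2$, necessarily one-dimensional. Your route is much shorter, relies only on a result the paper already cites in its introduction, and actually proves the stronger fact that no stable Higgs bundle of rank at least $2$ and degree $0$ exists on an elliptic curve; the paper's argument is more self-contained but considerably more delicate.
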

 \begin{proof}
 (1)
  Suppose $(A,\phi)$ is not   a Hermitian-Yang-Mills-Higgs pair, then $F$ is a  line bundle such that $i\Lambda\Theta|_F=\kappa$. $\Lambda\Theta$ being covariantly yields that  $\kappa$ is  a non-zero constant which shows $\deg_\omega(E)=\deg_\omega(F)\neq0$.

 (2) Since $\deg(E)=0$, according to the classical Grothendieck theorem, we have $E\simeq\mathcal{O}(m_1)\oplus\cdots\oplus\mathcal{O}(m_r)$, where
the sum of  integers $m_1,\cdots,m_r$ that are unique up to the permutation is zero.  It follows from the   Higgs-semistability of $E$ that if not all $m_i(i=1,\cdots, r)$ vanish then for some positive integer $n$ there is    a non-trivial morphism $\tilde\phi:\mathcal{O}(k)\rightarrow\mathcal{O}(k^\prime -2)$ $(n^\prime\neq n)$ that factors through the Higgs morphism $\phi|_{\mathcal{O}(k)}:\mathcal{O}(k)\rightarrow E\otimes\mathcal{O}(-2)$ and the natural projection. Therefore there is a non-zero element belongs to
$H^0(\mathbb{P}^1,\mathcal{O}(k^\prime-k-2))$ which means that $k^\prime\geq k+2>k$. Hence, the semi-stability and zero degree together guarantee that there is no component as $\mathcal{O}(k), k\neq0$. As a result, $m_1=\cdots=m_r=0$, then the Higgs field on the trivial bundle $E\simeq \mathcal{O}^{\oplus r}$ should to be also trivial. Fix a component $L\simeq \mathcal{O}$ of $E$ so that $E\simeq L\oplus L^\perp$, then  $F_A|_L=\bar\partial(f^{-1}\partial f)-\alpha\wedge\alpha^*$ where  $f$ is a smooth function, and $\alpha\in \Omega^{0,1}\otimes \textrm{Hom}(L^\perp,  L)$ denotes the second  fundamental form. Hence $i\Lambda\Theta|_L=\frac{1}{2}\Delta\ln f+|\alpha|^2=0$. By Hopf's maximum principle, $f$ has to be a constant, thus $\alpha$ must vanish, in other words, $L^\perp \simeq\mathcal{O}^{\oplus r-1}$ as the holomorphic bundles. Then the induction on the rank $r$ of $E$ gives the conclusion.

   (3) Suppose $(E,\phi)$ is Higgs-stable, thus $(E,h, \phi)$ is a harmonic Higgs bundle. Firstly, we claim that $E$ has to split. Indeed, if not so, by Atiyah's results\cite{7}, $E$ is isomorphic to $E^\prime\otimes L$ where $E^\prime$ is an indecomposable  bundle of  rank $r$ and degree
zero with a global section,  and $L$ is  a  line
bundle  of degree zero, moreover, there is an exact sequence $0\rightarrow\mathcal{O}_\Sigma\rightarrow E^\prime\rightarrow E^{\prime\prime}\rightarrow0$ for an indecomposable  bundle $E^{\prime\prime}$ of  rank $r-1$ and degree zero, which implies $L$ is a proper subbundle of $E$. Since the canonical line bundle of $\Sigma$ is trivial, the Higgs field $\phi$ induces a morphism $\tilde\phi: L\rightarrow E$, or a    morphism $\tilde\phi:\mathcal{O}_X\rightarrow E^\prime$. By the Higgs-stability of $E$, the composition $p\circ \tilde\phi\in H^0(X,\mathcal{O}_X)$ of $\tilde\phi$ and the projection $p:E\rightarrow L$ cannot be an isomorphism,  which exhibits  a contradiction. Hence we deduce the claim.
Now from the decomposition  $E=E_1\oplus E_2$ one easily sees that $E$ is not stable. Assume $E$ is strictly semistable, then $\deg_\omega(E_1)=\deg_\omega(E_2)=0$. By  recursion, we find that $E$ can be decomposed into the direct sum of indecomposable holomorphic bundles of degree zero, i.e. $E=\oplus E_i$. Let $\tilde\phi_i=p_i\circ \phi_i$ be the composition of  $\phi_i=\phi|_{E_i}:E_i\rightarrow E$ and the projection $p_i:E\rightarrow E_i$, then $(E_i,\tilde\phi_i)$ are all Higgs bundles. By the same  arguments, for each $E_i$, if $\tilde\phi_i\neq0$ there is a line subbundle  $L_i\subseteq E_i$ of degree zero such that the restriction of $\tilde\phi_i$ on $L_i$ is an isomorphism, namely $L_i$ is a $\phi$-invariant proper subbundle of $E$,  which will contradict with the Higgs-stability of $E$.  The remaining case is $\tilde\phi_i=0$ for $\forall i$, then for each $E_i$ there is $E_j,j\neq i,$ such that $\Phi_{ij}=p_j\circ\phi_i:E_i\rightarrow E_j$ is a non-zero morphism. Write $E_i=L_i\otimes E_i^\prime$, and it follows from the fact that $(E_i^\prime)^\vee\simeq E_i^\prime$ and the multiplicative structure\cite{7} of $E_i^\prime$'s that $\Phi_{ij}|_{L_i}:L_i\xrightarrow{\sim}L_j$ which again contradicts with the Higgs-stability. So far, we only need to prove that if $E$ is not semistable, then $(E,\phi)$ is not Higgs semistable.  To show it, we consider the  Harder-Narasimhan filtration of $E$: $0=E_0\subset E_1\subset\cdots\subset E_{l-1}\subset E_l=E$, and let $E_i$ be the smallest subbundle among them containing $\phi(E_1)$. If $\phi(E_1)\neq0$,  we get a non-zero morphism $\overline{\phi}:E_1\rightarrow Gr_i(E)$ by the composition of $\phi$ and taking quotient. But It is known that $H^0(X,\textrm{Hom}(E_1,Gr_i(E)))=0$ if $i>1$ since $E_1$ and $Gr_i(E)$ are all semistable and $\mu_\omega(E_1)>\mu_\omega(Gr_i(E))$. Thus  $E_1$ is a Higgs subbundle with $\mu_\omega(E_1)>\mu_\omega(E)$. We complete the proof.
\end{proof}

\begin{definition} (\cite{j}) Let $(E,\phi)$ be a Higgs bundle over a K\"{a}hler manifold $X$.  We call the following complex of coherent $\mathcal{O}_X$-modules:
$$\mathcal{E}^\bullet=(E\xrightarrow{\phi}E\otimes \Omega^1_X\xrightarrow{\phi}E\otimes \Omega^2_X\rightarrow\cdots)$$ the Higgs complex, and define the Higgs cohomology $H^i(X,(E,\phi)):=\mathbb{H}^i(X,\mathcal{E}^\bullet)$ to be the hypercohomology of the Higgs complex.
\end{definition}
\begin{corollary}
Suppose  a  Yang-Mills-Higgs system  $\{(E,\phi), h, d_A\}$  admits   a  strong degenerate Yang-Mills-Higgs pair $(A,\phi)$.
  \begin{enumerate}
    \item If $(E,\phi)$ is a non-Higgs-semistable bundle of rank 2 and $X=\mathbb{P}^n (n\geq 2)$, then the Higgs cohomologies $H^i(\mathbb{P}^n,(E(m),\tilde\phi))$ vanish  for all $m\in \mathbb{Z}$, $i=1,\cdots, n-1$, where $(E(k):=E\otimes \mathcal{O}_{\mathbb{P}^n}(m), \tilde\phi:=\phi|_{E}\otimes Id|_{\mathcal{O}_{\mathbb{P}^n}(m)})$ is regarded as a Higgs bundle.
        \item  If $(E,\phi)$ is Higgs-stable  with $\rank(E)\geq2$, then the Higgs cohomology $H^0(X,(E,\phi))$ vanishes, and if $X$ is an $n$-dimensional Calabi-Yau manifold, the Higgs cohomology  $H^{2n}(X,(E,\phi))$ is also vanishes.
  \end{enumerate}

\end{corollary}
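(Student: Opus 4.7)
The strategy is to leverage the holomorphic splitting $E = F\oplus K$ with $F = \Lambda\Theta(E)$ and $K = \ker(\Lambda\Theta)$ that the proof of Proposition~\ref{eq:y} constructs from the strongness plus degeneracy of $(A,\phi)$ alone — Higgs-semistability entered only at the very last step of that argument, and in particular the identity $\deg_\omega(K) = 0$ is established without using it.

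For (1), both $F$ and $K$ must be line bundles on $\mathbb{P}^n$ since $\rank E = 2$, hence of the form $\mathcal{O}(a)$ and $\mathcal{O}(0)$ respectively. Being Higgs subbundles, the restrictions $\phi|_F, \phi|_K$ lie in $H^0(\mathbb{P}^n,\textrm{End}(\mathcal{O}(c))\otimes\Omega^1_{\mathbb{P}^n}) = H^0(\mathbb{P}^n,\Omega^1_{\mathbb{P}^n}) = 0$, so $\phi$ and therefore $\tilde\phi$ on $E(m)$ vanish identically. The Higgs complex then collapses to a direct sum of two Koszul-type complexes with zero differentials, the hypercohomology spectral sequence degenerates at $E_1$, and
\[
H^i(\mathbb{P}^n,(E(m),\tilde\phi)) = \bigoplus_{p+q=i}\bigl[H^q(\mathbb{P}^n,\Omega^p_{\mathbb{P}^n}(a+m))\oplus H^q(\mathbb{P}^n,\Omega^p_{\mathbb{P}^n}(m))\bigr].
\]
Invoking Bott's vanishing theorem on the right-hand side for $1\le i\le n-1$ yields the claim.

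For the first assertion of (2), Higgs-stability implies Higgs-semistability, so Proposition~\ref{eq:y} applies and upgrades $(A,\phi)$ to a Hermitian-Yang-Mills-Higgs pair; degeneracy then forces the HYMH constant $\lambda$ to vanish, whence $\mu_\omega(E)=0$. A nonzero element $s\in H^0(X,(E,\phi))$ is a global section of $E$ annihilated by $\phi$, and by $\mathcal{O}_X$-linearity of $\phi$ the cyclic subsheaf $\mathcal{O}_X\cdot s\hookrightarrow E$ is a $\phi$-invariant sub-Higgs-sheaf; its saturation $L$ is a Higgs line subbundle of $E$ with $\deg_\omega(L)\ge 0 = \mu_\omega(E)$ and $\rank(L) < \rank(E)$, contradicting Higgs-stability. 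For the second assertion, Serre duality for the Higgs complex together with $K_X\cong\mathcal{O}_X$ identifies $H^{2n}(X,(E,\phi))$ with the dual of $H^0(X,(E^\vee,-\phi^\vee))$; the dual Hitchin pair is again strong, degenerate, and Higgs-stable of rank at least two, so the first half of (2) forces the vanishing.

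The main obstacle I anticipate lies in (1): the collapse of the Higgs complex to ordinary sheaf cohomology is clean, but the Bott-vanishing step must handle every bidegree $(p,q)$ with $p+q = i$ simultaneously, and in particular the corner $q=0$, $p=i$ — where $H^0(\mathbb{P}^n,\Omega^i_{\mathbb{P}^n}(k))$ can be nonzero for $k\gg 0$ — requires delicate attention. It is conceivable that one must exploit a finer restriction on the degree $a$ (forced by the compatibility of degeneracy, strongness, and non-semistability on a rank-two bundle over $\mathbb{P}^n$) than the sketch above makes visible, refining the direct-sum decomposition before invoking Bott.
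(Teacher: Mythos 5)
Your treatment of part (2) is correct and essentially the paper's argument: both routes reduce to Proposition \ref{eq:y} to upgrade $(A,\phi)$ to a Hermitian--Yang--Mills--Higgs pair with $\mu_\omega(E)=0$, and both derive a contradiction with Higgs-stability from a nonzero section annihilated by $\phi$. The paper does this differentially (using $\Lambda\Theta=0$ and the K\"ahler identities to show the section is parallel, hence spans a flat $\phi$-trivial line subbundle of degree $0$), whereas you saturate the cyclic subsheaf generated by the section; either mechanism works, and the Serre-duality step for $H^{2n}$ on a Calabi--Yau is identical in the two arguments.

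For part (1) you follow the same route as the paper --- split $(E,\phi)$ into Higgs line bundles $(F,\phi|_F)\oplus(K,\phi|_K)$ with $F\simeq\mathcal{O}(a)$, $a=\deg E$, $K\simeq\mathcal{O}$, and kill $\phi$ using $H^0(\mathbb{P}^n,\Omega^1_{\mathbb{P}^n})=0$ --- and the obstacle you flag at the end is genuine, not a defect of your write-up. With $\tilde\phi=0$ the hypercohomology of the Higgs complex is $\bigoplus_{p+q=i}\bigl(H^q(\mathbb{P}^n,\Omega^p_{\mathbb{P}^n}(a+m))\oplus H^q(\mathbb{P}^n,\Omega^p_{\mathbb{P}^n}(m))\bigr)$, and Bott's formula gives $H^0(\mathbb{P}^n,\Omega^p_{\mathbb{P}^n}(k))\neq0$ whenever $k>p\geq1$; so the $(p,q)=(i,0)$ summand is nonzero for all sufficiently large $m$ (already $H^0(\mathbb{P}^2,\Omega^1_{\mathbb{P}^2}(2))\neq0$ obstructs the case $n=2$, $i=1$, $m=2$, $E=\mathcal{O}(1)\oplus\mathcal{O}$). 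There is no hidden constraint on $a$ to rescue this: non-semistability only forces $a\neq0$. The paper's own proof does not resolve the difficulty --- it writes $H^i(\mathbb{P}^n,(E(k),\tilde\phi))=H^i(\mathbb{P}^n,\mathcal{O}_{\mathbb{P}^n}(k+m))\oplus H^i(\mathbb{P}^n,\mathcal{O}_{\mathbb{P}^n}(m))$, silently discarding every $\Omega^p$-twisted summand with $p\geq1$, which is legitimate only if ``Higgs cohomology'' is reinterpreted as ordinary sheaf cohomology of $E(m)$ once the Higgs field vanishes. So either the corollary must be read with that weaker notion (in which case your argument truncated at $p=0$ is complete and agrees with the paper), or the statement as written for all $m$ fails; your identification of the $q=0$ corner as the breaking point is exactly where the published proof is deficient.
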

 \begin{proof}(1) Since $\Lambda\Theta\neq0$, $(E,\phi)$ splits into the direct sum of two Higgs line bundles, thus $(E,\phi)=(F,\phi|_F)\oplus(K,\phi|_K)$, where $F\simeq\mathcal{O}_{\mathbb{P}^n}(k)$ for $k=\deg(E)$, $K\simeq\mathcal{O}_{\mathbb{P}^n}$.  However, we note that  $\phi|_F,\phi|_K\in H^0(\mathbb{P}^n,\Omega^1_{\mathbb{P}^n})$ should be zero because $H^0(\mathbb{P}^n,\Omega^1_{\mathbb{P}^n})=H^{1,0}(\mathbb{P}^n,\mathbb{C})=\overline{H^{0,1}(\mathbb{P}^n,\mathbb{C})}=\overline{H^1(X,\mathcal{O}_{\mathbb{P}^n})}=0$ when $n\geq 2$, thereby $H^i(\mathbb{P}^n,(E(k),\tilde\phi))=H^i(\mathbb{P}^n,\mathcal{O}_{\mathbb{P}^n}(k+m))\oplus H^i(\mathbb{P}^n,\mathcal{O}_{\mathbb{P}^n}(m))=0$ for $i=1,\cdots,n-1$.

 (2) The complex of sheaves of
$C^\infty$-sections
$\mathcal{A}^0(E)\xrightarrow{\mathcal{D}^{\prime\prime}_{(A,\phi)}}\mathcal{A}^1(E)\xrightarrow{\mathcal{D}^{\prime\prime}_{(A,\phi)}}\mathcal{A}^2(E)\longrightarrow\cdots$
gives a fine resolution of the Higgs complex. Therefore
the hypercohomology of the Higgs complex
is isomorphic to the cohomology of the complex of global sections
$\Gamma(\mathcal{A}^0(E))\xrightarrow{\mathcal{D}^{\prime\prime}_{(A,\phi)}}\Gamma(\mathcal{A}^1(E))\xrightarrow{\mathcal{D}^{\prime\prime}_{(A,\phi)}}\Gamma(\mathcal{A}^2(E))\longrightarrow\cdots$.
Assume that there is a non-trivial section $s\in \Gamma(\mathcal{A}^0(E))$ satisfies $d^{\prime\prime}_As=\phi(s)=0$, thus $\phi$ can be viewed as a $(\phi,0)$-invariant section. Then since $\Lambda\Theta=0$, we have $d^\prime_As=0$, which means $s$ may generate a flat line bundle $L\subset E$ with the trivial  Higgs field.
Hence $\deg(L)=\deg(E)=0$ that will contradict with the Higgs-stability of $E$. For a Calabi-Yau manifold $X$, $\Omega^n_X\simeq\mathcal{O}_X$, then by Serre duality, $\mathbb{H}^{2n}(X,\mathcal{E}^\bullet)\simeq(\mathbb{H}^0(X,(\mathcal{E}^\bullet)^\vee))^\vee$. The  Hitchin pair  on the dual stable Higgs bundle  $(E^\vee, h^\vee, \phi^{\vee})$ is also a strong degenerate Yang-Mills-Higgs pair, so the previous conclusion implies the vanishing of $\mathbb{H}^{2n}(X,\mathcal{E}^\bullet)$.
\end{proof}

Let   $\{(E, \phi),h,d_A)$ be the Yang-Mills-Higgs system  as that in the Proposition \ref{eq:y}. If the limit $\lim\limits_{c\to 0}(E,c\phi)$ exists (for example, when $X$ is a smooth projective manifold), it must be  a fixed point of the $\mathbb{C}^*$-action which implies that it carries the structure of   a system of Hodge bundles. More precisely, this means that $E$ with respect to the
limiting holomorphic structure splits holomorphically as a sum $E=\oplus_{i=1}^l E_i$ and  that the limiting Higgs field  is given by a collection of holomorphic
maps $\phi_i:E_i\rightarrow E_{i+1}\otimes \Omega^1_X, 1 \leq i \leq k$ (with the convention that $E_{l+1}=0$). Moreover each subbundle $E_i$ splits as $E_i=F_i\oplus K_i$ with $F_i\subset F, K_i\subset K$ and $\phi_i$ decomposes as $\phi_i=\theta_i\oplus\vartheta_i$ with $\theta_i: F_i\rightarrow F_{i+1}\otimes\Omega^1_X, \vartheta_i: K_i\rightarrow K_{i+1}\otimes\Omega^1_X$, thus $(\oplus_{i=1}^l F_i,\oplus_{i=1}^l\theta_i)$ and $(\oplus_{i=1}^l K_i,\oplus_{i=1}^l\vartheta_i)$ are both systems of Hodge bundles. Since the deformation changes
the holomorphic structures of $E, F$ and $ K$, but not their isomorphism classes as differentiable
complex vector bundles, hence their degrees remain unchanged, namely $\deg_\omega(\oplus F_i)=\deg_\omega(E)$ and $\deg_\omega(\oplus K_i)=0$.
\begin{corollary}
Suppose $X$ is a smooth projective manifold. If $c_2(E)=c_2(F)$,  the holomorphic tangent bundle $T_X$ is a semistable bundle and the map $\vartheta_{l-1}$ is a non-zero injective map,  then $\mu_\omega(K_{l-1})\leq\frac{n}{n+1}\mu_\omega(\Omega^1_X)$. In particular, if $X$ is a Riemann surface and $\vartheta_{l-1}$ satisfies the same assumption as above, we have $\mu_\omega(K_{l-1})\leq\frac{1}{2}\mu_\omega(\Omega^1_X)$.
\end{corollary}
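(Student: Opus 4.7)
The plan is to exploit the polystable system of Hodge bundles structure of the $\mathbb{C}^*$-limit $(K,\vartheta)=(\bigoplus_i K_i,\bigoplus_i\vartheta_i)$, and combine the slope inequalities coming from the Hodge filtration with the semistability of $\Omega^1_X$ (inherited from $T_X$) and Simpson's Bogomolov-Gieseker inequality for the polystable Higgs bundle $(K,\vartheta)$.

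First, I would record that $(K,\vartheta)$ is polystable of slope $0$: the orthogonal splitting $E=F\oplus K$ established in the proof of Proposition~\ref{eq:y} makes $(K,\phi|_K)$ a polystable direct summand of $(E,\phi)$, and polystability is preserved under the $\mathbb{C}^*$-limit. Because $\vartheta_i(K_i)\subset K_{i+1}\otimes\Omega^1_X$, each truncation $K_{\geq j}:=\bigoplus_{i\geq j}K_i$ is a Higgs subbundle, and Higgs-semistability immediately yields $\mu_\omega(K_{\geq j})\leq 0$ for $j\geq 2$. Specializing to $j=l$ and $j=l-1$ produces $\mu_\omega(K_l)\leq 0$ and $r_{l-1}\mu_\omega(K_{l-1})+r_l\mu_\omega(K_l)\leq 0$, where $r_i=\rank(K_i)$.

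Second, semistability of $T_X$ gives semistability of $\Omega^1_X$. The hypothesis $c_2(E)=c_2(F)$ combined with $c(E)=c(F)c(K)$ yields the Chern class identity $c_2(K)=-c_1(F)\cdot c_1(K)$, and together with $\deg_\omega(K)=0$ this substitutes into Simpson's Bogomolov-Gieseker inequality $\int_X(2r_Kc_2(K)-(r_K-1)c_1(K)^2)\wedge\omega^{n-2}\geq 0$ to constrain the discriminant of $(K,\vartheta)$. Provided this forces the $\omega$-semistability of the Hodge component $K_l$ as an ordinary vector bundle, the tensor product $K_l\otimes\Omega^1_X$ is semistable, and the nonzero injective morphism $\vartheta_{l-1}:K_{l-1}\hookrightarrow K_l\otimes\Omega^1_X$ gives the slope comparison $\mu_\omega(K_{l-1})\leq\mu_\omega(K_l)+\mu_\omega(\Omega^1_X)$, while the injection into a rank-$nr_l$ bundle forces $r_{l-1}\leq n r_l$.

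Eliminating $\mu_\omega(K_l)$ between the two inequalities produces $\mu_\omega(K_{l-1})\leq\frac{r_l}{r_l+r_{l-1}}\mu_\omega(\Omega^1_X)$; the extremal configuration $(r_{l-1},r_l)=(1,n)$ compatible with the rank constraint and with Bogomolov-Gieseker equality yields the stated bound $\frac{n}{n+1}\mu_\omega(\Omega^1_X)$. The main obstacle I anticipate is precisely this last step: justifying that the $c_2$-hypothesis actually saturates Bogomolov-Gieseker on the relevant Hodge component (rather than giving only a weak inequality), since the naive slope chain alone produces only the rank-dependent coefficient $\frac{r_l}{r_l+r_{l-1}}$, which is $\geq\frac{1}{n+1}$ and not automatically $\leq\frac{n}{n+1}$. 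The Riemann surface case is the specialization $n=1$, in which $T_X$ is automatically semistable and the $c_2$-hypothesis is vacuous; the bound $\mu_\omega(K_{l-1})\leq\frac{1}{2}\mu_\omega(\Omega^1_X)$ is then a manifestation of a classical Arakelov-type inequality for variations of Hodge structure on curves, recovered from the slope chain at the extremality $r_{l-1}=r_l$ forced by injectivity when $n=1$.
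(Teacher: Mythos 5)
Your argument has a genuine gap, and it is the one you yourself flag at the end. The slope chain you propose hinges on two things that do not come for free: (i) the $\omega$-semistability of the top Hodge component $K_l$ as an \emph{ordinary} bundle, which you hope to extract from a saturated Bogomolov--Gieseker inequality --- but Higgs-semistability of $(K,\vartheta)$ only tells you that every subsheaf of $K_l$ (being automatically a Higgs subsheaf, since $\vartheta_l=0$) has slope $\leq 0$, which does not make $K_l$ semistable when $\mu_\omega(K_l)<0$; and (ii) even granting that, elimination gives the coefficient $\frac{r_l}{r_l+r_{l-1}}$, which depends on the ranks of the Hodge pieces and has no reason to equal, or be dominated in the right direction by, $\frac{n}{n+1}$. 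There is no ``extremal configuration'' to invoke: the statement is claimed for all $(r_{l-1},r_l)$, so a rank-dependent coefficient is fatal.

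The paper closes both gaps at once by \emph{not} comparing $K_{l-1}$ with $K_l\otimes\Omega^1_X$. Instead, let $P$ be the maximal destabilizing subsheaf of $K_{l-1}$ (so $P$ is semistable and $\mu_\omega(K_{l-1})\leq\mu_\omega(P)$), contract $\vartheta_{l-1}$ with $T_X$ to view it as an injective map $K_{l-1}\otimes T_X\to K_l$, and observe that $P\oplus\vartheta_{l-1}(P\otimes T_X)$ is a Higgs subsheaf of the slope-zero semistable Higgs bundle $(K,\vartheta)$ because $\vartheta_l=0$. This yields $\deg_\omega(P)+\deg_\omega(P\otimes T_X)\leq 0$. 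Now $P\otimes T_X$ is semistable (tensor product of semistable sheaves), so
$\mu_\omega(P)+\mu_\omega(T_X)\leq -\deg_\omega(P)/\rank(P\otimes T_X)$,
which rearranges to $\bigl(\rank(P\otimes T_X)+\rank(P)\bigr)\mu_\omega(P)\leq\rank(P\otimes T_X)\,\mu_\omega(\Omega^1_X)$. The crucial point you are missing is that $\rank(P\otimes T_X)=n\,\rank(P)$, so the coefficient $\rank(P\otimes T_X)/(\rank(P\otimes T_X)+\rank(P))$ is \emph{exactly} $\frac{n}{n+1}$, independently of all Hodge-theoretic ranks; no Bogomolov--Gieseker saturation is needed, and the $c_2$ hypothesis is used only once, at the very beginning, to conclude $c_2(K)=0$ so that $K$ is harmonic and its $\mathbb{C}^*$-limit is a variation of Hodge structure whose associated system of Hodge bundles is a semistable Higgs bundle. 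Your preliminary steps (polystability of $(K,\vartheta)$, the truncations $K_{\geq j}$ being Higgs subsheaves, semistability of $\Omega^1_X$) are fine, but the endgame must go through the subsheaf $P\oplus\vartheta_{l-1}(P\otimes T_X)$ rather than through semistability of $K_l$.
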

\begin{proof}Since $\Lambda\Theta|_K=0$, thus $(K,\phi|_K)$ is a polystable Higgs bundle with trivial first Chern class, $c_2(K)=c_2(E)-c_2(F)=0$, then  $K$ is a harmonic Higgs bundle, thereby the corresponding  limiting systems of Hodge bundles   is a variation of Hodge structures which implies that $(\oplus_{i=1}^l K_i,\oplus_{i=1}^l\vartheta_i)$ is a semistable Higgs bundle.
  Let $P$ be the the first term in the Harder-Narasimhan filtration of $K_{l-1}$, which is semistable since $P$ has maximal slope among the
subbundles of $K_{l-1}$. Consider the injective map $\vartheta_{l-1}:K_{l-1}\otimes T^1_X\rightarrow K_{l}$, then $P\oplus \vartheta_{l-1}(P\otimes T_X)$  is a Higgs subbundle due to the vanishing of  $\vartheta_l$.
Therefore $\deg_\omega(P)+\deg_\omega(P\otimes T_X)\leq0$. Since $P\otimes T_X$ is a semistable bundle, then we have
$$\mu_\omega(P\otimes T_X)=\mu_\omega(P)+\mu_\omega(T_X)\leq -\frac{\deg_\omega(P)}{\rank(P\otimes T_X)},$$
thus
 $$(\rank(P\otimes T_X)+\rank(P))\mu_\omega(P)\leq\rank(P\otimes T_X)\mu_\omega(\Omega^1_X).$$
 Hence
 $$\mu_\omega(K_{l-1})\leq\mu_\omega(P)\leq\frac{\rank(P\otimes T_X)}{\rank(P\otimes T_X)+\rk(P)}\mu_\omega(\Omega^1_X)=\frac{n}{n+1}\mu_\omega(\Omega^1_X).$$
\end{proof}

\section{Stable Yang-Mills-Higgs Pairs}
\subsection{Deformation of the Hitchin Pair }
Let $(A_t=A_0+t\alpha, \phi_t=\phi_0+t\beta)$ be a family of the Hitchin pairs  with one parameter $t\in\mathbb{C}$, where $(A_0,\phi_0)$ is a fixed Hitchin pair, $\alpha\in\mathcal{A}^1_X(\textrm{End}(E))$, $\beta\in \mathcal{A}^{1,0}_X(\textrm{End}(E))$.
Then the deformation pair $(\alpha^{0,1},\beta)$ is subject to the following equations
\begin{eqnarray}\label{eq:l}
\begin{aligned}
  td^{\prime\prime}_{A_0}\alpha^{0,1}+t^2\alpha^{0,1}\wedge\alpha^{0,1}=0,\\
td^{\prime\prime}_{A_0}\beta+t[\alpha^{0,1},\phi_0]+t^2[\alpha^{0,1},\beta]=0,\\
t[\phi_0,\beta]+t^2\beta\wedge\beta=0.
\end{aligned}
\end{eqnarray}
 \begin{definition}\begin{enumerate}
                     \item If $d^{\prime\prime}_{A_0}\alpha^{1,0}=d^{\prime\prime}_{A_0}\beta=0$, then $(\alpha^{0,1},\beta)$ is called the holomorphic deformation pair.
 \item If one  expresses $\alpha=\alpha_0+\Sigma_{i\geq 1}\alpha_it^i$  and $\beta=\beta_0+\Sigma_{i\geq 1}\beta_it^i$, then $(a_0^{0,1},\beta_0)$ is called the infinitesimal deformation pair.
  \end{enumerate}
  \end{definition}
 The infinitesimal deformation pair $(a_0^{0,1},\beta_0)$ satisfies $\mathcal{D}^{\prime\prime}_{(A_0,\phi_0)}(a_0^{0,1}+\beta_0)=0$, thus $[a_0^{0,1}+\beta_0]\in H^1(X,(\textrm{End}(E),\tilde\phi_0))$ where $(\textrm{End}(E),\tilde\phi_0)$ is viewed as a Higgs bundle via  the induced  Higgs field $\tilde\phi_0=\phi_0|_E\otimes Id|_{E^\vee}+Id|_{E}\otimes\phi_0^\vee|_{E^\vee}$.
Let $\pi:\Omega^1_X\rightarrow X$ be the holomorphic cotangent bundle on $X$, then we have $E\simeq\pi_*\mathcal{S}$ and $(E\xrightarrow{\phi_0} E\otimes\Omega^1_X)\simeq\pi_*(\mathcal{S}\otimes(\mathcal{O}_{\Omega^1_X}\xrightarrow{\Phi}\pi^*\Omega^1_X))$ for a locally free $\mathcal{O}_{\Omega^1_X}$-sheaf $\mathcal{S}$ where $\Phi\in H^0(\pi^*\Omega^1_X)$ is
the tautological section\cite{4}, which induces a Koszul complex\cite{8} $$K_\bullet(\Phi)=(0\rightarrow\wedge^n\pi^*T_X\rightarrow\wedge^{n-1}\pi^*T_X\rightarrow\cdots\rightarrow\pi^*T_X\xrightarrow{\Phi^\vee}\mathcal{O}_{\Omega^1_X}\rightarrow0).$$
The zero scheme of $\Phi$ can be identified with $X$, hence $K_\bullet(\Phi)$ is a projective resolution of $\mathcal{O}_X$.
By definition, $\ext^i(\mathcal{O}_X\otimes_{\mathcal{O}_{\Omega^1_X}}\mathcal{S},\mathcal{S})=\mathbb{H}^i(\mathcal{H}om(K_\bullet(\Phi)\otimes_{\mathcal{O}_{\Omega^1_X}}\mathcal{S},\mathcal{S}))=
\mathbb{H}^i(K^\vee_\bullet(\Phi)\otimes_{\mathcal{O}_{\Omega^1_X}}\mathcal{E}nd(\mathcal{S}))=\mathbb{H}^i(\pi_*(K^\vee_\bullet(\Phi)\otimes_{\mathcal{O}_{\Omega^1_X}}\mathcal{E}nd(\mathcal{S})))$, therefore $[a_0^{0,1}+\beta_0]\in\ext^1(\mathcal{O}_X\otimes_{\mathcal{O}_{\Omega^1_X}}\mathcal{S},\mathcal{S})=\ext^1(\mathcal{O}_X,\mathcal{E}nd_{\mathcal{O}_{\Omega^1_X}}(\mathcal{S}))$.
\begin{proposition}If $\ext^2(\mathcal{O}_X,\mathcal{E}nd_{\mathcal{O}_{\Omega^1_X}}(\mathcal{S}))=0$, the solution of equations \eqref{eq:l} exists.
\end{proposition}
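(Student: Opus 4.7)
The plan is to recognize the system \eqref{eq:l} as a Maurer--Cartan equation for the DG Lie algebra $(\mathcal{A}^{\bullet}(\mathrm{End}(E)), \delta, [\cdot,\cdot])$ with differential $\delta := d''_{A_0} + [\phi_0, \cdot\,]$ that controls deformations of the Higgs bundle $(E,\phi_0)$, and then to solve it order by order in $t$, using the hypothesis $\ext^2 = 0$ to kill the obstructions.

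First I would set $\eta := \alpha^{0,1} + \beta \in \mathcal{A}^{0,1}(\mathrm{End}(E)) \oplus \mathcal{A}^{1,0}(\mathrm{End}(E))$ and split $\delta \eta = d''_{A_0}\alpha^{0,1} + (d''_{A_0}\beta + [\phi_0,\alpha^{0,1}]) + [\phi_0,\beta]$ by bidegree, matching the three linear terms of \eqref{eq:l} (using graded antisymmetry $[\phi_0,\alpha^{0,1}] = [\alpha^{0,1},\phi_0]$ on degree-one elements). Similarly $\tfrac{1}{2}[\eta,\eta]$ decomposes into $\alpha^{0,1}\wedge\alpha^{0,1}$, $[\alpha^{0,1},\beta]$, and $\beta\wedge\beta$. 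After dividing by $t$, the three equations of \eqref{eq:l} collapse into the single Maurer--Cartan equation
\begin{equation}
\delta\eta + \tfrac{1}{2}[\eta,\eta] = 0.
\end{equation}
The operator $\delta$ is the Higgs--Dolbeault differential computing hypercohomology of the Higgs complex of $(\mathrm{End}(E),\tilde\phi_0)$, so by the resolution argument recalled before the proposition its cohomology is $H^\bullet(X,(\mathrm{End}(E),\tilde\phi_0)) \cong \ext^{\bullet}(\mathcal{O}_X, \mathcal{E}nd_{\mathcal{O}_{\Omega^1_X}}(\mathcal{S}))$.

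Next I would seek a formal solution $\eta = \sum_{k\geq 1}\eta_k\, t^k$ with $\eta_1 = \alpha_0^{0,1} + \beta_0$ the prescribed infinitesimal deformation cocycle, and equate coefficients of $t^k$ to obtain the recursive system
\begin{equation}
\delta\eta_k = \Omega_k, \qquad \Omega_k := -\tfrac{1}{2}\!\!\sum_{\substack{i+j=k\\ i,j\geq 1}}\![\eta_i,\eta_j].
\end{equation}
For $k=1$ this reduces to $\delta\eta_1 = 0$, which holds by hypothesis on the infinitesimal deformation pair. For $k\geq 2$, supposing $\eta_1,\dots,\eta_{k-1}$ have been constructed to solve the first $k-1$ equations, the graded Jacobi identity combined with the lower-order relations $\delta\eta_i = \Omega_i$ yields $\delta\Omega_k = 0$, so $\Omega_k$ defines a class $[\Omega_k] \in \ext^2(\mathcal{O}_X, \mathcal{E}nd_{\mathcal{O}_{\Omega^1_X}}(\mathcal{S}))$.

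The hypothesis of the proposition forces this obstruction class to vanish, so $\Omega_k$ is $\delta$-exact and we may choose $\eta_k$ with $\delta\eta_k = \Omega_k$, completing the induction and producing a formal power series solution of \eqref{eq:l}. The only nontrivial step is the closedness $\delta\Omega_k = 0$ at each order; this is not really an obstacle but rather the familiar bookkeeping at the heart of every Maurer--Cartan obstruction-theoretic argument, and follows purely formally from Jacobi and the inductive hypothesis. Convergence of the formal series, if one wanted a genuine analytic deformation rather than a formal one, would require an additional Kuranishi-type slice argument, but the stated existence follows already from the obstruction calculation.
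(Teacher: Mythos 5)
Your reduction of \eqref{eq:l} to a single Maurer--Cartan equation for the differential $\mathcal{D}^{\prime\prime}_{(A_0,\phi_0)}=d^{\prime\prime}_{A_0}+[\phi_0,\cdot\,]$, and the order-by-order analysis in which the quadratic term $\Omega_k$ is $\delta$-closed by the Jacobi identity and the inductive hypothesis, is the same formal skeleton as the paper's argument; the identification of the obstruction space with $\ext^2(\mathcal{O}_X,\mathcal{E}nd_{\mathcal{O}_{\Omega^1_X}}(\mathcal{S}))$ also matches the discussion preceding the proposition. The difference is in how the primitive $\eta_k$ is chosen: the paper does not merely pick \emph{some} solution of $\delta\eta_k=\Omega_k$, but uses the vanishing of the harmonic space $\mathcal{H}^2$ to get $\{\mathcal{D}^{\prime\prime}_{(A_0,\phi_0)},G^{(2)}\}=Id$ for the Green-type operator $G^{(2)}=(\mathcal{D}^{\prime\prime}_{(A_0,\phi_0)})^*\circ Q^{(2)}$, and then \emph{defines} $\alpha_k^{0,1}+\beta_k=-G^{(2)}(\Omega_k)$.

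This is not a cosmetic difference, and it is where your proposal has a genuine gap. The proposition asserts existence of a solution of \eqref{eq:l}, and the paper spends the second half of its proof establishing convergence of the power series in a Sobolev norm (via the Kodaira--Spencer estimates $\|G^{(2)}(\xi\wedge\xi)\|_s<C\|\xi\|_s^2$) and then smoothness by elliptic regularity of $\Delta_{\mathcal{D}^{\prime\prime}_{(A_0,\phi_0)}}(a^{0,1}+\beta)+(\mathcal{D}^{\prime\prime}_{(A_0,\phi_0)})^*((a^{0,1}+\beta)\wedge(a^{0,1}+\beta))=0$. Your closing remark that ``the stated existence follows already from the obstruction calculation'' concedes only a formal power series solution, which is weaker than what is claimed; moreover, the convergence argument cannot be bolted on afterwards to an arbitrary inductive choice of $\eta_k$, because the elliptic estimate controlling $\|\eta_k\|_s$ in terms of the lower-order terms is an estimate on the \emph{specific} primitive $-G^{(2)}(\Omega_k)$. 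So you should replace ``choose any $\eta_k$ with $\delta\eta_k=\Omega_k$'' by the canonical harmonic-theoretic choice (noting that $\ext^2=0$ gives $\mathcal{H}^2=0$ and hence the homotopy identity $\{\delta,G^{(2)}\}=Id$, which together with $\delta\Omega_k=0$ makes $-G^{(2)}(\Omega_k)$ a solution), and then supply the convergence and regularity step.
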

\begin{proof}
From \eqref{eq:l} the equations that the higher order terms of $\alpha,\beta$ should obey read
\begin{eqnarray}\label{eq:k}
\mathcal{D}^{\prime\prime}_{(A_0,\phi_0)}(a_k^{0,1}+\beta_k)+\sum_{i+j=k-1}(a_i^{0,1}+\beta_j)\wedge(a_j^{0,1}+\beta_j)=0
\end{eqnarray}
for $\forall k\geq1$. Let $\mathcal{H}^k$ denote the space of harmonic $k$-forms valued in $\textrm{End}(E)$ corresponding to the Laplacian $\Delta_{\mathcal{D}^{\prime\prime}_{(A_0,\phi_0)}}$, then there are isomorphisms $H^k(X,\textrm{End}(E))\simeq\mathbb{H}^k(X,\mathcal{E}\emph{nd}_{\mathcal{O}_X}(\mathcal{E}))\simeq\mathcal{H}^k$, and we have the operator $Q^{(k)}:
L^2(\mathcal{A}^k_X(\textrm{End}(E)))\rightarrow L^2(\mathcal{A}^k_X(\textrm{End}(E)))$ such that $\Delta_{\mathcal{D}^{\prime\prime}_{(A_0,\phi_0)}}\circ Q^{(k)}=Id-p_{\mathcal{H}^k}$ where $p_{\mathcal{H}^k}$ stands for the projection on the space $\mathcal{H}^k$. One can easily check that $Q^{(k)}$ commutes with $\mathcal{D}^{\prime\prime}_{(A_0,\phi_0)}$ and $(\mathcal{D}^{\prime\prime}_{(A_0,\phi_0)})^*$, then define the operator $G^{(k)}=(\mathcal{D}^{\prime\prime}_{(A_0,\phi_0)})^*\circ Q^{(k)}=Q^{(k)}\circ(\mathcal{D}^{\prime\prime}_{(A_0,\phi_0)})^*:L^2(\mathcal{A}^k_X(\textrm{End}(E)))\rightarrow L^2(\mathcal{A}^{k-1}_X(\textrm{End}(E)))$.

By assumption, $\Delta_{\mathcal{D}^{\prime\prime}_{(A_0,\phi_0)}}\circ Q^{(2)}=Id$, hence $\{\mathcal{D}^{\prime\prime}_{(A_0,\phi_0)},G^{(2)}\}=Id$. We put
\begin{align*}
 \alpha^{0,1}_k&=-p^{0,1}(\sum_{i+j=k-1}G^{(2)}((a_i^{0,1}+\beta_j)\wedge(a_j^{0,1}+\beta_j))) \\
  \beta_k&=-p^{1,0}(\sum_{i+j=k-1}G^{(2)}((a_i^{0,1}+\beta_j)\wedge(a_j^{0,1}+\beta_j))),
\end{align*}
where $p$ denotes the projection on the corresponding space, which provide the desired solution of \eqref{eq:k} via the induction on $k$.

 In order to complete the proof, we have to show the formal power series  $\alpha^{0,1}=\alpha^{0,1}_0+\Sigma_{k\geq 1}\alpha^{0,1}_kt^k$  and $\beta=\beta_0+\Sigma_{k\geq 1}\beta_kt^k$ are convergent,  and furthermore,  are  smooth sections. The method applied here is standard due to Kodaira-Spencer\cite{9}. We denote by $||\xi||_s$ the Sobolev norm of the section $\xi\in\mathcal{A}^k_X(\textrm{End}(E))$ which is given by the
sum of the $L^2$-norms of $i$-th derivative of $\xi$ for all $i \leq s$, where $s$ is a sufficiently large  integer
compared to $ 2\dim_\mathbb{C}X $. It follows from  the standard estimate of elliptic differential operators that
\begin{align*}
  ||G^{(2)}((a_i^{0,1}+\beta_j)\wedge(a_j^{0,1}+\beta_j))||_s&<C_1||(a_i^{0,1}+\beta_j)\wedge(a_j^{0,1}+\beta_j))||_{s-1}\\
  &<C_2||a_i^{0,1}+\beta_j||_s||a_i^{0,1}+\beta_j||_s
\end{align*}
with  positive constants $C_1$, $C_2$ depend only on $s$ and the manifold $X$. Then by induction on $k$  there exists a constant $C_3$ such that
$$ ||a_k^{0,1}+\beta_k||_s<[\frac{k+1}{2}]C_3^k||a_0^{0,1}+\beta_0||_s^{k+1}.$$
Therefore, if we choose suitably $a_0^{0,1}+\beta_0\in\mathcal{H}^1$ such that $|t|||a_0^{0,1}+\beta_0||_s$ is  sufficiently small, then we may deduce the convergence. By Sobolev's fundamental lemma, $a^{0,1}+\beta\in C^m(\mathcal{A}_X^1(\textrm{End}(E)))$ for $m=s-1-\dim_\mathbb{C}X$.  On the other hand, we note that $\Delta_{\mathcal{D}^{\prime\prime}_{(A_0,\phi_0)}}(a^{0,1}+\beta)+(\mathcal{D}^{\prime\prime}_{(A_0,\phi_0)})^*((a^{0,1}+\beta)\wedge(a^{0,1}+\beta))=0$ which is an elliptic PDE for sufficiently small $a^{0,1}+\beta$, hence $a^{0,1}+\beta\in C^\infty(\mathcal{A}_X^1(\textrm{End}(E)))$.
\end{proof}
\subsection{Stable Yang-Mills-Higgs Pairs}
Let us consider the second variation of
the Yang-Mills-Higgs functional, thus we calculate
$$\frac{d^2}{dt^2}|_{t=0}YMH(A_t,\phi_t)=\textrm{Re}\langle\Upsilon,\mathcal{D}^*_{(A_0,\phi_0)}\mathcal{D}_{(A_0,\phi_0)}\Upsilon+\Upsilon^*\lrcorner\mathcal{R}_{(A_0,\phi_0)}\rangle,$$
where $\Upsilon=-(\alpha^{0,1})^*+\alpha^{0,1}+\beta+\beta^*\in\mathcal{A}^1_X(\textrm{End}(E))$ associated with the  deformation pair $(\alpha^{0,1},\beta)$.
\begin{definition}A strong Yang-Mills-Higgs pair $(A_0,\phi_0)$ on a holomorphic vector bundle $(E,h)$ is called the  semi-stable (stable, unstable) Yang-Mills-Higgs pair along the given deformation pair $(\alpha^{0,1},\beta)$ if the following condition is satisfied
\begin{eqnarray}\label{eq:m}
\textrm{Re}\langle\Upsilon,\mathcal{D}^*_{(A_0,\phi_0)}\mathcal{D}_{(A_0,\phi_0)}\Upsilon+\Upsilon^*\lrcorner\mathcal{R}_{(A_0,\phi_0)}\rangle\geq(>,<)0,
\end{eqnarray}
and is called the weakly semi-stable Yang-Mills-Higgs pair if for the arbitrary admitted holomorphic deformation pair $(a^{0,1}\neq0, \beta)$ the inequality \eqref{eq:m} holds.
\end{definition}
\begin{proposition}\label{prpo:sd}If $(A_0,\phi_0)$ is a stable Yang-Mills-Higgs pair along the  deformation pair $(\alpha^{0,1},\beta)$, then we have the inequality
\begin{align}\label{l}
 Q_{(A_0,\phi_0)}(\alpha,\tilde\beta):= i\langle\alpha\circ\alpha+\tilde\beta\circ\tilde\beta,\Lambda\Theta_{(A_0,\phi_0)}\rangle+||d_{A_0}^*\alpha||^2-||\tilde\phi_0\circ\tilde\beta||^2>0,
\end{align}
where $Q_{(A_0,\phi_0)}$ is a hermitian quadratic form on the space of deformation pairs, $\alpha=-(\alpha^{0,1})^*+\alpha^{0,1}$, $\tilde\beta=\beta+\beta^*$, $\tilde\phi_0=\phi_0+\phi_0^*$, and the action $\circ$   is defined by $$\Omega\circ\Xi=\Omega\lrcorner\Xi^{1,0}+\Xi^{0,1}\lrcorner\Omega$$ for any $\Omega\in\mathcal{A}^k(\End(E))$ and $\Xi\in\mathcal{A}^1(\End(E))$. In particular, if  $\alpha$ is parallel with respect to the Hitchin-Simpson connection associated with $(A_0,\phi_0)$,  the inequality \eqref{l} is rewritten as
\begin{align}\label{ll}
 2i\langle\alpha,\Lambda\Theta_{(A_0,\phi_0)}\star\alpha\rangle -i\langle\alpha,\Lambda(\Theta_{(A_0,\phi_0)}\star\alpha)\rangle
 +||\nabla_{A_0}\alpha||^2-\langle\alpha,\mathbb{R}\circ\alpha\rangle&\nonumber\\
 -i\langle\tilde\beta,\Lambda\Theta_{(A_0,\phi_0)}\star\tilde\beta\rangle
  -||\tilde\phi_0\circ\tilde\beta||^2&>0,
\end{align}
where the action $\star$   is defined by
$$\Omega\star\Xi=[\Omega,\Xi^{1,0}]+[\Xi^{0,1},\Omega].$$
\end{proposition}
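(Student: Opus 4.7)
The plan is to start from the second-variation formula
$$\textrm{Re}\langle\Upsilon,\mathcal{D}^*_{(A_0,\phi_0)}\mathcal{D}_{(A_0,\phi_0)}\Upsilon+\Upsilon^*\lrcorner\mathcal{R}_{(A_0,\phi_0)}\rangle > 0$$
displayed just above the statement, decompose $\Upsilon=\alpha+\tilde\beta$ into its anti-Hermitian and Hermitian parts, and rewrite the left-hand side as $Q_{(A_0,\phi_0)}(\alpha,\tilde\beta)$ by means of the Kähler identities \eqref{eq:a} for the Hitchin-Simpson connection. The structural guide is the same chain of manipulations carried out in the proof of Proposition \ref{q}.

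First I would split $\mathcal{D}_{(A_0,\phi_0)}=\mathcal{D}'+\mathcal{D}''$ with $\mathcal{D}'=d'_{A_0}+\phi_0$ and $\mathcal{D}''=d''_{A_0}+\phi_0^*$, and split $\Upsilon$ into $\Upsilon^{1,0}=\alpha^{1,0}+\beta$ and $\Upsilon^{0,1}=\alpha^{0,1}+\beta^*$. Using $(\mathcal{D}'')^*=-i[\Lambda,\mathcal{D}']$ and $(\mathcal{D}')^*=i[\Lambda,\mathcal{D}'']$ together with integration by parts, the inner product $\langle\Upsilon,\mathcal{D}^*\mathcal{D}\Upsilon\rangle$ is converted into a combination of $\|\mathcal{D}''\Upsilon\|^2$-type norms and commutator terms of the form $i\langle\Upsilon,[\Lambda\Theta_{(A_0,\phi_0)},\Upsilon]\rangle$, exactly as in the derivation of \eqref{eq:n}.

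Second I would decompose the curvature $\mathcal{R}_{(A_0,\phi_0)}=\Theta_{(A_0,\phi_0)}+d'_{A_0}\phi_0+d''_{A_0}\phi_0^*$ and separate the contraction $\Upsilon^*\lrcorner\mathcal{R}$ into its $(1,1)$-piece (contracting with $\Lambda\Theta$) and its $(2,0)/(0,2)$-pieces. The $\Theta$-pieces, combined with the commutator terms produced in the first step, reorganize through the Jacobi identity into $i\langle\alpha\circ\alpha+\tilde\beta\circ\tilde\beta,\Lambda\Theta_{(A_0,\phi_0)}\rangle$, while the $d'\phi_0+d''\phi_0^*$ pieces cancel against the cross terms $\mathcal{D}'^*\mathcal{D}''+\mathcal{D}''^*\mathcal{D}'$ up to a $\|d^*_{A_0}\alpha\|^2$ contribution coming from the pure connection part and a $-\|\tilde\phi_0\circ\tilde\beta\|^2$ contribution coming from the Higgs-bracket squared; the crucial minus sign is precisely the sign flip in the two Kähler identities of \eqref{eq:a}. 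The cross terms mixing $\alpha$ and $\tilde\beta$ drop out after taking real parts because $\alpha$ is anti-Hermitian and $\tilde\beta$ is Hermitian, giving inequality \eqref{l}.

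For the parallel case I would use $\mathcal{D}_{(A_0,\phi_0)}\alpha=0$ to discard all derivative-of-$\alpha$ terms and then invoke the Bochner-Weitzenböck formula for bundle-valued $1$-forms
$$\{d_A,d_A^*\}\phi_\mu=\nabla_A^*\nabla_A\phi_\mu-\mathbb{R}_{\mu\bar\nu}\phi^{\bar\nu}-[(F_A)_{\mu\bar\nu},\phi^{\bar\nu}],$$
already exploited in Proposition 2.1, to rewrite $\|d^*_{A_0}\alpha\|^2$ as $\|\nabla_{A_0}\alpha\|^2-\langle\alpha,\mathbb{R}\circ\alpha\rangle$ plus the curvature contractions naturally expressed through the $\star$-action with $\Lambda\Theta_{(A_0,\phi_0)}$; this yields \eqref{ll}. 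The main obstacle is the careful bookkeeping of the Jacobi-identity manipulations and of the $(1,0)/(0,1)$ signs needed to recognise the quadratic expressions $\alpha\circ\alpha$ and $\tilde\beta\circ\tilde\beta$ and, above all, to isolate the single negative contribution $-\|\tilde\phi_0\circ\tilde\beta\|^2$ in an otherwise positive-looking sum; a secondary but routine point is the verification that all $\alpha$-$\tilde\beta$ cross terms cancel after passing to the real part.
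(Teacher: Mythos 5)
Your proposal follows essentially the same route as the paper: the paper likewise computes the second variation explicitly (by differentiating the first-variation formula from the proof of Proposition \ref{q} once more in $t$, using that the first-order terms vanish at a strong pair), reorganizes the curvature contractions and commutators via the K\"ahler identities \eqref{eq:a} and the Jacobi identity into the three terms of $Q_{(A_0,\phi_0)}$, and then applies the Bochner--Weitzenb\"ock formula exactly as you describe to obtain \eqref{ll} in the parallel case. The only cosmetic difference is your starting point (the abstract operator form $\mathcal{D}^*\mathcal{D}\Upsilon+\Upsilon^*\lrcorner\mathcal{R}$ rather than the differentiated first variation) and that the paper verifies the vanishing of the $\alpha$--$\tilde\beta$ cross terms by direct computation where you invoke the Hermitian/anti-Hermitian splitting; both lead to the same bookkeeping.
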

\begin{proof}From the proof of Proposition \ref{q}, we have seen that
 \begin{align*}
 \frac{d}{dt}|_{t=0}YMH(A_t,\phi_t)=&2i\langle(\alpha^{0,1})^*,d^\prime_{A_0}\Lambda\Theta_{(A_0,\phi_0)}\rangle+2i\langle\beta,[\phi_0,\Lambda\Theta_{(A_0,\phi_0)}]\rangle\\
 &+2i\langle\alpha^{0,1},d^{\prime\prime}_{A_0}\Lambda\Theta_{(A_0,\phi_0)}\rangle-2i\langle\beta^*,[\phi_0^*,\Lambda\Theta_{(A_0,\phi_0)}]\rangle.
 \end{align*}
 Therefore,  the second variation is given by
\begin{align*}
  &\frac{d^2}{dt^2}|_{t=0}YMH(A_t,\phi_t)\\
  =&-2i\langle(\alpha^{0,1})^*,[(\alpha^{0,1})^*,\Lambda\Theta_{(A_0,\phi_0)}]-id^\prime_{A_0}(d^\prime_{A_0})^*(\alpha^{0,1})^*-id^\prime_{A_0}(d^{\prime\prime}_{A_0})^*\alpha^{0,1}\rangle\\
  &+2i\langle(\alpha^{0,1})^*,d^\prime_{A_0}\Lambda([\beta,\phi_0^*]+[\phi_0,\beta^*])\rangle+2i\langle\beta,[\beta,\Lambda\Theta_{(A_0,\phi_0)}]\rangle\\
  &+2\langle\beta,[\phi_0,(d^\prime_{A_0})^*(\alpha^{0,1})^*+(d^{\prime\prime}_{A_0})^*\alpha^{0,1}]\rangle
   +2i\langle\beta,[\phi_0,\Lambda([\beta,\phi_0^*]+[\phi_0,\beta^*])]\rangle\\
   &+2i\langle\alpha^{0,1},[\alpha^{0,1},\Lambda\Theta_{(A_0,\phi_0)}]+id^{\prime\prime}_{A_0}(d^\prime_{A_0})^*(\alpha^{0,1})^*+id^{\prime\prime}_{A_0}(d^{\prime\prime}_{A_0})^*\alpha^{0,1}\rangle\\
 &+2i\langle\alpha^{0,1},d^{\prime\prime}_{A_0}\Lambda([\beta,\phi_0^*]+[\phi_0,\beta^*])\rangle-2i\langle\beta^*,[\beta^*,\Lambda\Theta_{(A_0,\phi_0)}]\rangle\\
&-2\langle\beta^*,[\phi_0^*,(d^\prime_{A_0})^*(\alpha^{0,1})^*+(d^{\prime\prime}_{A_0})^*\alpha^{0,1}]\rangle
-2i\langle\beta^*,[\phi_0^*,\Lambda([\beta,\phi_0^*]+[\phi_0,\beta^*])]\rangle\\
=&-4i\textrm{Re}\langle\alpha^{0,1}\lrcorner(\alpha^{0,1})^*+\beta\lrcorner\beta^*,\Lambda\Theta_{(A_0,\phi_0)}\rangle+4||(d^\prime_{A_0})^*(\alpha^{0,1})^*||^2+4\textrm{Re}\langle(d^\prime_{A_0})^*(\alpha^{0,1})^*,(d^{\prime\prime}_{A_0})^*\alpha^{0,1}
\rangle\\&-4||\phi_0\lrcorner\beta^*||^2-4\textrm{Re}\langle\phi_0\lrcorner\beta^*,\beta\lrcorner\phi_0^*\rangle\\
=&-4i\langle\alpha^{0,1}\lrcorner(\alpha^{0,1})^*+\beta\lrcorner\beta^*,\Lambda\Theta_{(A_0,\phi_0)}\rangle+2||d_{A_0}^*\alpha||^2-2||\phi_0\lrcorner\beta^*+\beta\lrcorner\phi_0^*||^2,
\end{align*}
which exhibits the Proposition the inequality \eqref{l}. If $\mathcal{D}_{(A_0,\phi_0)}\alpha=0$, then $\alpha$ is $d_{A_0}$-closed and $\tilde\phi_0$-invariant. By means of the Bochner-Weitzenb\"{o}ck
formula which leads to
\begin{align*}
 ||d_{A_0}^*\alpha||^2&=||\nabla_{A_0}\alpha||^2-\langle\alpha,\mathbb{R}\circ\alpha+\Theta_{(A_0,\phi_0)}\circ\alpha\rangle\\
 &=||\nabla_{A_0}\alpha||^2-\langle\alpha,\mathbb{R}\circ\alpha\rangle-i\langle\alpha,\Lambda(\Theta_{(A_0,\phi_0)}\star\alpha)-\Lambda\Theta_{(A_0,\phi_0)}\star\alpha\rangle,
\end{align*}
we immediately obtain the inequality the inequality \eqref{ll}.
\end{proof}

\begin{example}Assume that $v$ is a non-zero (1,0)-type  vector field that is  parallel with respect to the connection determined by the K\"{a}hler metric on $X$, and $ \Pi$ is a non-zero $\Delta_{d_{A_0}}$-harmonic  (1,1)-form valued in $\End(E)$, i.e. $d_{A_0}\Pi=d^*_{A_0}\Pi=0$, and they together satisfy
$$(\nabla_{A_0})_v\Pi=0, [\phi_0,v\lrcorner\Pi]=0, [v\lrcorner\Pi, v\lrcorner\Pi]=0,$$
 then  $v\lrcorner\Pi$ is also $\Delta_{d_{A_0}}$-harmonic, since $d_{A_0}(v\lrcorner\Pi)=(\nabla_{A_0})_v\Pi-v\lrcorner d_{A_0}\Pi=0$, and $\langle d_{A_0}^*(v\lrcorner\Pi),\theta\rangle=\langle\Pi,\bar v^\vee\wedge d_{A_0}\theta\rangle=-\langle d_{A_0}^*\Pi,\bar v^\vee\wedge \theta\rangle=0$ for $\forall\theta\in C^\infty(\End(E))$. Hence $(\alpha^{0,1}, \beta)=(v\lrcorner\Pi,\phi_0)$ gives rise to a holomorphic deformation. Moreover if the chosen pair $(A_0,\phi_0)$ is Hermitian and degenerate, then $Q(\alpha,\beta)=-4||[\phi_0,\phi_0^*]||^2$, thus $(A_0,\phi_0)$ is not stable along such deformation direction.
\end{example}

\begin{corollary}\begin{enumerate}
                     \item A Hermitian-Yang-Mills-Higgs pair $(A_0,\phi_0)$ is stable along the  deformation pair $(\alpha^{0,1},\beta)$ if and only if we have
\begin{align}\label{eq:n}
||d_{A_0}^*\alpha||^2> ||\phi_0\lrcorner\beta^*+\beta\lrcorner\phi_0^*||^2.
\end{align}
\item A stable Yang-Mills-Higgs pair $(A_0,\phi_0)$ along the  deformation pair $(\alpha^{0,1},\beta)$ on a Riemann surface  satisfies
\begin{align}\label{y}
 &\langle\tilde\beta,F_{A_0}\circ\tilde\beta-[\phi_0,\tilde\beta]\lrcorner\phi_0-\phi_0^*\lrcorner[\tilde\beta,\phi_0^*]\rangle+\langle\alpha,d_{A_0}d_{A_0}^*\alpha-\Theta_{(A_0,\phi_0)}
  \circ\alpha\rangle>0.
\end{align}
\item If there exists a weakly semi-stable Yang-Mills-Higgs pair $(A_0,\phi_0)$ with the property that $\Lambda\Theta_{(A_0,\phi_0)}$ and $\Lambda[\phi_0,\phi_0^*]$ are both non-zero on a compact K\"{a}hler manifold $X$, then the singular homology $H_1(X,\mathbb{R})$ vanishes.
     \end{enumerate}
 \end{corollary}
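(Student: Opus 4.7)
My plan is to argue by contradiction: assume $H_1(X,\mathbb{R})\neq 0$ and produce an admitted holomorphic deformation of $(A_0,\phi_0)$ along which the second-variation quadratic form of Proposition~\ref{prpo:sd} is strictly negative, contradicting weak semi-stability. The starting input is K\"ahler Hodge theory: $H_1(X,\mathbb{R})\otimes\mathbb{C}\cong H^{1,0}(X)\oplus H^{0,1}(X)$, so non-vanishing of $H_1(X,\mathbb{R})$ supplies a non-zero holomorphic $1$-form $\omega\in H^0(X,\Omega^1_X)$, whose conjugate $\bar\omega$ is a harmonic $(0,1)$-form.

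Using $\omega$, I would consider the holomorphic deformation $(\alpha^{0,1},\beta)=(\bar\omega\cdot \mathrm{Id}_E,\; c\,\phi_0)$ with $c\in i\mathbb{R}$. Every obstruction in~\eqref{eq:l} is immediate: $\bar\partial\bar\omega=0$ and $d_{A_0}''\phi_0=0$ by holomorphy; the scalar coefficients commute with $\phi_0$, so $[\alpha^{0,1},\phi_0]=[\phi_0,\beta]=0$; and $\bar\omega\wedge\bar\omega=\phi_0\wedge\phi_0=0$. Since $\omega\not\equiv 0$ we have $\alpha^{0,1}\neq 0$, and weak semi-stability forces $Q(\alpha,\tilde\beta)\geq 0$ on this family.

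Applying Proposition~\ref{prpo:sd}, two terms cancel. Because $\alpha=(\bar\omega-\omega)\mathrm{Id}_E$ is a scalar harmonic $1$-form times the identity, $d_{A_0}^*\alpha=0$; and the choice $c\in i\mathbb{R}$ forces $\phi_0\lrcorner\beta^*+\beta\lrcorner\phi_0^*=(c+\bar c)\,\phi_0\lrcorner\phi_0^*=0$. The surviving bracket $i\langle\alpha\circ\alpha+\tilde\beta\circ\tilde\beta,\Lambda\Theta_{(A_0,\phi_0)}\rangle$ then splits via $\Lambda\Theta=\Lambda F_{A_0}+\Lambda[\phi_0,\phi_0^*]$. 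Using the strong-Yang-Mills-Higgs relation $[\Lambda\Theta,\phi_0]=0$ to trade the $\Lambda F_{A_0}$ piece for a $\Lambda[\phi_0,\phi_0^*]$ piece, together with the identity
\[
i\langle[\Lambda[\phi_0,\phi_0^*],\phi_0],\phi_0\rangle=\|[\phi_0,\phi_0^*]\|^2
\]
extracted from the proof of the first Proposition, the $|c|^2$-coefficient of $Q$ isolates a strictly negative multiple of $\|[\phi_0,\phi_0^*]\|^2$. Since $\Lambda[\phi_0,\phi_0^*]\neq 0$ forces $\|[\phi_0,\phi_0^*]\|^2>0$, choosing $|c|$ sufficiently large relative to the scaling of $\bar\omega\cdot\mathrm{Id}_E$ makes $Q<0$, the desired contradiction; hence $\omega\equiv 0$, $H^{1,0}(X)=0$, and $H_1(X,\mathbb{R})=0$.

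The main obstacle is the sign bookkeeping in the last step: after all cancellations one must verify that the extracted $-\|[\phi_0,\phi_0^*]\|^2$-contribution truly survives and is not neutralised by the leftover pairing involving $\Lambda F_{A_0}$. It is precisely here that the hypothesis $\Lambda\Theta\neq 0$ (and not merely $\Lambda[\phi_0,\phi_0^*]\neq 0$) enters, ruling out such accidental cancellation and pinning down the sign.
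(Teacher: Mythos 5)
Your proposal only treats part (3) of the corollary; parts (1) and (2) are not addressed at all, even though they are the more computational halves of the statement (part (1) is read off from Proposition \ref{prpo:sd} using $i\Lambda\Theta_{(A_0,\phi_0)}=\lambda\,Id_E$, and part (2) uses the fact that on a Riemann surface $\Lambda$ is an isometry on $2$-forms together with the Jacobi identity to rewrite $4i\langle\beta\lrcorner\beta^*,\Lambda\Theta\rangle$). For part (3) your overall strategy matches the paper's: exhibit an admitted holomorphic deformation pair $(\alpha^{0,1}\neq 0,\,c\phi_0)$ built from a nontrivial degree-one Hodge class and force $Q<0$ by letting $|c|\to\infty$. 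The paper takes $\alpha^{0,1}=v\,\Lambda\Theta_{(A_0,\phi_0)}$ for $v$ a closed $(0,1)$-form (so the hypothesis $\Lambda\Theta_{(A_0,\phi_0)}\neq 0$ is what makes the deformation admitted, i.e. $\alpha^{0,1}\neq 0$), keeps $c$ effectively real, and obtains $Q=\langle\alpha,\Delta_{A_0}\alpha\rangle-4|c|^2\|[\phi_0,\phi_0^*]\|^2$, where the negative term comes from $-\|\tilde\phi_0\circ\tilde\beta\|^2=-\|\phi_0\lrcorner\beta^*+\beta\lrcorner\phi_0^*\|^2$.

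The genuine gap is in your sign analysis, and it is fatal as written. With $\beta=c\phi_0$ one has $\phi_0\lrcorner\beta^*+\beta\lrcorner\phi_0^*=(c+\bar c)\,\phi_0\lrcorner\phi_0^*$, so your choice $c\in i\mathbb{R}$ annihilates exactly the term that supplies the $-4|c|^2\|[\phi_0,\phi_0^*]\|^2$ in the paper's computation. You then try to recover negativity from $i\langle\tilde\beta\circ\tilde\beta,\Lambda\Theta_{(A_0,\phi_0)}\rangle$ by "trading" the $\Lambda F_{A_0}$ piece for a $\Lambda[\phi_0,\phi_0^*]$ piece via $[\Lambda\Theta,\phi_0]=0$; but that relation says precisely that $[\Lambda F_{A_0},\phi_0]=-[\Lambda[\phi_0,\phi_0^*],\phi_0]$, so once the pairing is brought into commutator form (as in the manipulations of the first Proposition) the two pieces cancel exactly, $i\langle[\Lambda\Theta,\phi_0],\phi_0\rangle=0$, and no multiple of $\|[\phi_0,\phi_0^*]\|^2$ survives. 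The identity $i\langle[\Lambda[\phi_0,\phi_0^*],\phi_0],\phi_0\rangle=\|[\phi_0,\phi_0^*]\|^2$ cannot be imported in isolation while discarding its exact negative from the $\Lambda F_{A_0}$ side. Your closing paragraph concedes that the sign has not been verified, and the hypothesis $\Lambda\Theta_{(A_0,\phi_0)}\neq 0$ does not rescue it: in the paper that hypothesis is used only to guarantee $\alpha^{0,1}\neq 0$, not to control the sign of any pairing. The fix is simply to take $c$ real (or with nonzero real part): then $-\|(c+\bar c)\phi_0\lrcorner\phi_0^*\|^2$ grows like $-4c^2\|[\phi_0,\phi_0^*]\|^2$, which is strictly negative since $\Lambda[\phi_0,\phi_0^*]\neq 0$ forces $[\phi_0,\phi_0^*]\neq 0$, while the $\alpha$-dependent terms are independent of $c$; your choice $\alpha^{0,1}=\bar\omega\cdot\mathrm{Id}_E$ is then a perfectly serviceable (indeed cleaner) substitute for the paper's $v\,\Lambda\Theta$.
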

 \begin{proof}
(1) If $(A_0,\phi_0)$ is the Hermitian-Yang-Mills-Higgs pair  the stability condition \eqref{eq:m} obviously reduces to the inequality \eqref{eq:n}.

 (2) For  the case of Riemann surafce, we should note that the operator $\Lambda$ on 2-forms is an isometry with respect to the K\"{a}hler matric, hence
\begin{align*}
 4i\langle\beta\lrcorner\beta^*,\Lambda\Theta_{(A_0,\phi_0)}\rangle&=-2\langle\beta+\beta^*,\Theta_{(A_0,\phi_0)}\lrcorner\beta+\beta^*\lrcorner\Theta_{(A_0,\phi_0)}\rangle\\
 &=-2\langle\beta+\beta^*,F_{A_0}\circ(\beta+\beta^*)\rangle-4\langle[\beta,\beta^*],[\phi_0,\phi_0^*]\rangle\\
 &=-2\langle\beta+\beta^*,F_{A_0}\circ(\beta+\beta^*)\rangle-4||[\beta,\phi_0^*]||^2,
\end{align*}
where the third equality is due to the Jacobi identity. Then one can easily check the inequality \eqref{y}.

(3) By $\partial\bar\partial$-lemma, the space $V$ consisting of the closed (0,1)-forms on $X$  is isomorphic to the cohomology $H^{0,1}(X,\mathbb{C})$. Then we take $\alpha^{0,1}=v\Lambda\Theta_{(A_0,\phi_0)}$ for an element $v\in V$, thus $\alpha=2\textrm{Re} (v)\Lambda\Theta_{(A_0,\phi_0)}$, and  $\beta=c\phi_0$ for a nonzero constant $c$.  It follows that  $(\alpha^{0,1}, \beta)$ forms a holomorphic deformation pair from $(A_0,\phi_0)$ being a strong Yang-Mills-Higgs pair. Thereby we have
\begin{align*}
  Q_{(A_0,\phi_0)}(\alpha,\tilde\beta)=\langle\alpha,\Delta_{A_0}\alpha\rangle-4|c|^2||[\phi_0,\phi_0^*]||^2.
\end{align*}
Since $\Lambda[\phi_0,\phi_0^*]\neq0$ means $[\phi_0,\phi_0^*]\neq0$ and $c$ can be chosen to be sufficiently large such that $Q_{(A_0,\phi_0)}(\alpha,\tilde\beta)<0$, $H^{0,1}(X,\mathbb{C})$ has to vanish, thus $H^1(X,\mathbb{C})$ must also vanish by Hodge decomposition theorem.
\end{proof}
\paragraph{\textbf{Acknowledgements}} The authors would like to thank Prof. Kang Zuo and Prof. Xi Zhang for their supports.

 \end{document}